\documentclass[atmp]{ipart_v1}

\Vol{25}
\Issue{5}
\Year{2021}
\firstpage{1199}

\usepackage[english]{babel}
\usepackage[utf8]{inputenc}

\usepackage{textcomp}
\usepackage{amstext}
\usepackage{amsthm}
\usepackage{amssymb}
\usepackage{stmaryrd}

\makeatletter

\providecommand{\tabularnewline}{\\}

\numberwithin{equation}{section}
\numberwithin{figure}{section}
\theoremstyle{plain}
\newtheorem{thm}{\protect\theoremname}
  \theoremstyle{plain}
  \newtheorem{lem}[thm]{\protect\lemmaname}
  \theoremstyle{plain}
  \newtheorem{conjecture}[thm]{\protect\conjecturename}

\makeatother

\providecommand{\conjecturename}{Conjecture}
\providecommand{\lemmaname}{Lemma}
\providecommand{\theoremname}{Theorem}

\begin{document}

\title[Heterotic/$F$-theory duality]{Heterotic/$F$-theory duality and Narasimhan-Seshadri equivalence}

\author[H. Clemens and S. Raby]{Herbert Clemens and Stuart Raby}

\begin{abstract}
Finding the $F$-theory dual of a Heterotic model with Wilson-line
symmetry breaking presents the challenge of achieving the dual $\mathbb{Z}_{2}$-action
on the $F$-theory model in such a way that the $\mathbb{Z}_{2}$-quotient
is Calabi-Yau with an Enriques $\mathrm{GUT}$ surface over which
$SU\left(5\right)_{gauge}$ symmetry is maintained. We propose a new
way to approach this problem by taking advantage of a little-noticed
choice in the application of Narasimhan-Seshadri equivalence between
real $E_{8}$-bundles with Yang-Mills connection and their associated
complex holomorphic $E_{8}^{\mathbb{C}}$-bundles, namely the one
given by the real outer automorphism of $E_{8}^{\mathbb{C}}$ by complex
conjugation. The triviality of the restriction on the compact real
form $E_{8}$ allows one to introduce it into the $\mathbb{Z}_{2}$-action,
thereby restoring $E_{8}$- and hence $SU\left(5\right)_{gauge}$
-symmetry on which the Wilson line can be wrapped.
\end{abstract}

\maketitle
\tableofcontents

\section{Introduction}

Duality between Heterotic models and $F$-theory models in String
Theory begins with the compact real form $E_{8}$ of the simple complex
algebraic group $E_{8}^{\mathbb{C}}$. On the Heterotic side one begins
with a Calabi-Yau threefold $V_{3}$ elliptically fibered over a smooth
del Pezzo surface $B_{2}$. $V_{3}$ comes equipped with two bundles
\[
F_{a}\oplus F_{b}
\]
each endowed with a Yang-Mills connection with structure group the
compact real group $E_{8}$. The connection determines and is determined
by its restriction to each elliptic fiber $E_{b_{2}}$ for $b_{2}\in B_{2}$.

$F$-theory begins with an elliptically fibered Calabi-Yau fourfold
$W_{4}$ elliptically fibered over a Fano threefold $B_{3}$ with
origins in the idealized Calabi-Yau fourfold with equation
\begin{equation}
y^{2}=x^{3}+a_{0}z^{5}\label{eq:ideal}
\end{equation}
where $z$ and $a_{0}$ are sections of the anti-canonical bundle
of $B_{3}$. The exceptional fibers of a crepant resolution of $W_{4}$
over points of $S_{\mathrm{GUT}}:=\left\{ z=0\right\} \subseteq B_{3}$
correspond to the positive simple roots of $E_{8}$ intersecting as
dictated by the $E_{8}$-Dynkin diagram, i.e. they map precisely to
the exceptional fibers of the crepant resolution of the $E_{8}$ rational
double point surface singularity
\begin{equation}
\left\{ y^{2}=x^{3}+z^{5}\right\} \subseteq\mathbb{C}^{3}.\label{eq:E8}
\end{equation}
As we will see in the following, this resolution is constructed entirely
within the product of the Lie algebra of the complex algebraic group
$E_{8}^{\mathbb{C}}$ and its set of Borel subalgebras, the latter
being a smooth complex projective manifold \cite{Springer}.

\subsection{Duality in the presence of an order-$2$ element in $\pi_{1}\left(V_{3}^{\vee}\right)$ }

It is often convenient that the Heterotic model, that we will henceforth
denote as $V_{3}^{\vee}$ have an unbranched (Calabi-Yau) double cover
$V_{3}$, or, said otherwise, that a Heterotic model $V_{3}$ admit
a freely acting involution with Calabi-Yau quotient. In particular,
after $E_{8}$-symmetry is broken to $SU\left(5\right)$-symmetry
on $V_{3}/B_{2}$, the last step in symmetry-breaking to what physicists
call the 'Standard Model' MSSM is accomplished by wrapping what is
referred to in string theory literature as a `Wilson line' on the
$\mathbb{Z}_{2}$-quotient $V_{3}^{\vee}/B_{2}^{\vee}$. The $\mathbb{Z}_{2}$
must also act trivially on the Yang-Mills connections on $F_{a}\oplus F_{b}$
so that the Heterotic quotient $V_{3}^{\vee}/B_{2}^{\vee}$ is equipped
with two $E_{8}$-bundles 
\begin{equation}
F_{a}^{\vee}\oplus F_{b}^{\vee},\label{eq:bundles}
\end{equation}
each with an inherited Yang-Mills connection. Said otherwise, the
quotient preserves the $E_{8}$-symmetry of the initial bundles, as
well as its breaking to $SU\left(5\right)$-symmetry on which the
Wilson line is wrapped. 

On the $F$-theory side, the $\mathbb{Z}_{2}$-action may have orbifold
singularities but must be free on the smooth surface $S_{\mathrm{GUT}}\subseteq B_{3}$.
Again the $\mathbb{Z}_{2}$-action must preserve the initial $E_{8}$-symmetry,
as well as the breaking to $SU\left(5\right)$.\footnote{The implications of this last issue seem not to have been fully appreciated
in the literature. Another issue connected with Wilson line breaking
in $F$-theory is the existence of vector-like exotics. We will deal
with that issue separately in a forthcoming paper.}

However, in order that the $F$-theory quotient be Calabi-Yau, any
$\mathbb{Z}_{2}$-action on the $F$-theory side will have to incorporate
the involution
\[
\frac{dx}{y}\mapsto-\frac{dx}{y}
\]
on the relative one-form on the fibers of the elliptically fibered
$F$-theory model $W_{4}/B_{3}$. As is easily seen in (\ref{eq:ideal}),
a consequence is that each $E_{8}$-root, as represented by an exceptional
curve over a point of $S_{\mathrm{GUT}}$, is sent to its negative.
On the other hand, one must start with $E_{8}$-symmetry on the $F$-theory
quotient quotient as well. 

Our conclusion will be that, in fact, whatever the $\mathbb{Z}_{2}$-action
on $W_{4}/B_{3}$ turns out to be, it will have to somehow incorporate
the symmetry
\[
-I_{4}:\mathfrak{h}_{SU\left(5\right)}^{\mathbb{C}}\rightarrow\mathfrak{h}_{SU\left(5\right)}^{\mathbb{C}}
\]
on the complexified Cartan subalgebra of $SU\left(5\right)$, the
involution that sends each root to minus itself, an involution that
is not an element of the Weyl group $W\left(SU\left(5\right)\right)$
but one that does preserve the symmetry with respect to the compact
real group $SU\left(5\right)$.

\subsection{The example of $SU\left(2\right)$}

To illustrate how this can work, we illustrate in the simplest case.
We consider the $A_{1}$ rational double point surface singularity
\[
y^{2}=xz.
\]
This is a quotient singularity via the map from the $\left(u,v\right)$-plane
given by
\[
\begin{array}{c}
y=uv\\
x=u^{2}\\
z=v^{2}.
\end{array}
\]
Assigning $u$ and $v$ weight $1/2$ the versal deformation space
is the weighted homogeneous space
\begin{equation}
y^{2}=xz+a_{2}\label{eq:su2}
\end{equation}
of weight two where the Casimir polynomial algebra on the complexified
Cartan $\mathfrak{h}_{SU\left(2\right)}^{\mathbb{C}}$ has generator
\[
a_{2}:\mathfrak{h}_{SU\left(2\right)}^{\mathbb{C}}\rightarrow\frac{\mathfrak{h}_{SU\left(2\right)}^{\mathbb{C}}}{W\left(SU\left(2\right)\right)}.
\]
The base extension
\[
\left|\left(\begin{array}{cc}
x & y+\sqrt{a_{2}}\\
-y+\sqrt{a_{2}} & z
\end{array}\right)\right|=0
\]
has equivariant crepant resolution parametrized by $\mathfrak{h}_{SU\left(2\right)}^{\mathbb{C}}$
and with exceptional fiber parametrized by either the ratio of the
rows or the ratio of the columns of the above $2\times2$ matrix.
Interchanging rows and columns by transposition is accomplished by
replacing $y$ with $-y$, that is, by acting on the singularity (\ref{eq:su2})
by the automorphism
\[
\left(x,y,z\right)\mapsto\left(x,-y,z\right).
\]
Via the faithful spin representation given by the quaternions, we
may consider $SU\left(2\right)$ as a real matrix group with complexification
$SL\left(2;\mathbb{C}\right)$. What will be relevant for us in this
paper is the following commutative diagram
\[
\begin{array}{ccccc}
 &  & SU\left(2\right)\\
 & \swarrow &  & \searrow\\
SL\left(2;\mathbb{C}\right) &  & \overset{\iota}{\longrightarrow} &  & SL\left(2;\mathbb{C}\right)
\end{array}
\]
where $\iota$ is complex conjugation. The equivariant crepant resolution
of (\ref{eq:su2}) involves a choice of Weyl chamber identifying the
exceptional curve of the resolution with the positive root. The `flop'
$y\mapsto-y$ interchanges the two possible choices of positive simple
root corresponding to the two possible equivariant crepant resolutions.
The necessity of introducing the base extension in order to equivariantly
resolve the family (\ref{eq:su2}) and the `flop' $y\mapsto-y$ interchanging
the two possible equivariant crepant resolutions lies at the heart
of the problem of finding a canonical crepant resolution of the Tate
form (\ref{eq:Tate}) defining an $F$-theory model $W_{4}$. 

The action of $\iota$ on the $\mathfrak{h}_{SU\left(2\right)}^{\mathbb{C}}$
is therefore $h\mapsto-h$. Notice that this action should \textit{not}
be thought of as the action of $W\left(SU\left(2\right)\right)$ since
the action of $\iota$ is trivial on the compact real group $SU\left(2\right)$.
Rather it should be thought of as the transformation that interchanges
each root $\varrho$ with its negative $-\varrho$. In fact for $n>2$,
$\iota$ does not act on roots as an element of the Weyl group, the
difference of the actions being given by the non-trivial symmetry
of the Dynkin diagram.

\subsection{Heterotic $E_{8}$ versus $F$-theory $E_{8}^{\mathbb{C}}$ }

The fact that the Heterotic model relies on the properties of the
real group $E_{8}$ while the $F$-theory model relies on properties
of the complex algebraic group $E_{8}^{\mathbb{C}}$ is a central
theme of this paper. More specifically, as explained below, the passage
from the real to the complex group in constructing Heterotic/$F$-theory
duality rests on the choice between the two possible complexifications
of the same real bundle via Narasimhan-Seshadri equivalence. The $\mathbb{Z}_{2}$-action
must reverse that choice if symmetry of the real group and the Calabi-Yau
property are to be simultaneously preserved on the quotient.

\subsection{Narasimhan-Seshadri equivalence}

Let $G_{\mathbb{R}}$ denote a compact simple real Lie group and let
$G_{\mathbb{C}}$ denote the simple complex algebraic group for which
$G_{\mathbb{R}}$ is the compact real form. Narasimhan-Seshadri equivalence
on a compact Riemann surface $C$ equates homomorphisms
\[
\pi_{1}\left(C\right)\rightarrow G_{\mathbb{R}}
\]
and semi-stable $G_{\mathbb{C}}$ vector bundles on $C$. The relevant
remark is that, since $G_{\mathbb{R}}$ has a faithful real linear
representation, therefore $G_{\mathbb{C}}$ can be defined by extension
of scalars, that is, replacing real matrix entries with complex ones.
Therefore complex conjugation induces a real involution
\[
\iota:G_{\mathbb{C}}\rightarrow G_{\mathbb{C}}
\]
that leaves $G_{\mathbb{R}}$ pointwise fixed. $\iota$ is of course
not complex analytic. Thus the Narasimhan-Seshadri equivalence implicitly
makes a choice of one of the two possible complex structures on the
flat $G_{\mathbb{C}}$-bundle. The purpose of this paper is to point
out a situation in which $G_{\mathbb{C}}=E_{8}^{\mathbb{C}}$ and
the choice matters. This happens when introducing $\mathbb{Z}_{2}$-actions
on $F$-theory/Heterotic dual manifolds with the property that the
respective quotient manifolds continue to be dual.

\subsection{Outline of the paper}

In $F$-theory the exceptional components of the fibers of a crepant
resolution $\tilde{W}_{4}/B_{3}$ of $W_{4}/B_{3}$ are identified
with a system of positive simple roots of $SU\left(5\right)$. What
is often less attended to in the presence of a $\mathbb{Z}_{2}$-action
is the trajectory of those roots as initial $E_{8}$-symmetry is broken.
In particular, on the Heterotic side the initial symmetry on $V_{3}^{\vee}/B_{2}^{\vee}$
is $E_{8}$-symmetry. Therefore a Heterotic dual $\tilde{W}_{4}^{\vee}/B_{3}^{\vee}$
should also manifest initial $E_{8}$-symmetry. 

Section \ref{sec:-theory/Heterotic-Duality2} is devoted to establishing
the fact that, in order that $\tilde{W}_{4}^{\vee}/B_{3}^{\vee}$
be Calabi-Yau, the $\mathbb{Z}_{2}$-action must incorporate the standard
involution
\begin{equation}
\left(x,y\right)\mapsto\left(x,-y\right)\label{eq:00}
\end{equation}
on the Weierstrass form of the elliptic fibers of $W_{4}/B_{3}$.
In addition it is shown how this last is compatible with the fact
that the $\mathbb{Z}_{2}$-action on the Heterotic side that, as it
must, incorporates the trivial involution
\[
\left(x,y\right)\mapsto\left(x,y\right)
\]
on the Weierstrass form of the elliptic fibers of the Heterotic model
$V_{3}/B_{2}$. This Section also reviews the construction of the
semi-stable limit in $F$-theory and the critical role that Narasimhan-Seshadri
equivalence plays there.

Section \ref{sec:Retaining--symmetry3} is devoted to showing that
the one (real) involution on $E_{8}^{\mathbb{C}}$ that leaves $E_{8}$
pointwise fixed, namely complex conjugation, exactly reverses the
sign of each $E_{8}$-root. That is, complex conjugation acts as minus
the identity $\left(-I_{8}\right)$ on the Cartan subalgebra of $E_{8}^{\mathbb{C}}$. 

Section \ref{sec:Device-for-tracking4} employs the Tate form for
$W_{4}/B_{3}$ to imbed it in a family of rational double-point surface
singularities that are in turn mapped into the semi-universal deformation
of the $E_{8}$-rational double-point surface singularity
\[
y^{2}=x^{3}+z^{5}.
\]

Section \ref{sec:Equivariant-crepant-resolution5} examines Brieskorn-Grothendieck
equivariant crepant resolution of the semi-universal deformation of
the $E_{8}$-rational double-point surface singularity. We show that
the involution (\ref{eq:00}) is derived from the central involution
$-I_{8}$ on the complex Cartan subalgebra $\mathfrak{h}_{E_{8}}^{\mathbb{C}}$
so that, by Section \ref{sec:Retaining--symmetry3} it can be built
into the $\mathbb{Z}_{2}$-action without breaking $E_{8}$-symmetry.

Section \ref{sec:Tracking-the-equivariant6} tracks the $SU\left(5\right)$-roots,
as manifest in the exceptional components of a general fiber of a
crepant resolution $\tilde{W}_{4}/B_{3}$ of $W_{4}/B_{3}$ over $S_{\mathrm{GUT}}$,
back to their origins as $E_{8}$-roots exploiting the commutativity
of the three-dimensional commutative diagram obtained by mapping the
top row of 
\[
\begin{array}{ccc}
SL\left(5;\mathbb{C}\right) & \rightarrow & E_{8}^{\mathbb{C}}\\
\uparrow &  & \uparrow\\
SU\left(5\right) & \rightarrow & E_{8}\\
\downarrow &  & \downarrow\\
SL\left(5;\mathbb{C}\right) & \rightarrow & E_{8}^{\mathbb{C}}
\end{array}
\]
to the bottom row by the complex conjugate involution $\iota$. It
is exactly the commutativity of this diagram that allows us to claim
that initial $E_{8}$-symmetry and subsequent $SU\left(5\right)$-symmetry
are preserved on the $F$-theory quotient $W_{4}^{\vee}/B_{3}^{\vee}$.

Finally in Section \ref{sec:Crepant-resolution-conjecture7} we
state a conjecture that, if true, would derive from the Brieskorn-Grothendieck
equivariant crepant resolution and the choice of positive Weyl chamber
the construction of a `canonical' crepant resolution of $W_{4}/B_{3}$. 

\clearpage

\section{$F$-theory/Heterotic Duality\label{sec:-theory/Heterotic-Duality2}}

\subsection{Smooth elliptically-fibered Heterotic theory}

The starting point in the construction of smooth Heterotic theory
is an elliptically-fibered Calabi-Yau threefold $V_{3}/B_{2}$ over
a smooth del Pezzo surface $B_{2}$ such that $V_{3}/B_{2}$ comes
equipped with two two bundles 
\[
F_{a}\oplus F_{b}
\]
with structure group the compact real group $E_{8}$ . Each bundle
is endowed with a Yang-Mills connection, a connection determining
and determined by its restriction to each elliptic fiber $E_{b_{2}}$.
Each restriction is flat and therefore given by a homomorphism
\begin{equation}
\pi_{1}\left(E_{b_{2}}\right)\rightarrow E_{8}.\label{eq:fgprep}
\end{equation}
Since $\pi_{1}\left(E_{b_{2}}\right)$ is abelian, the image of the
homomorphism can be conjugated into a maximal torus of $E_{8}$. So
any semi-stable $E_{8}$-bundle with flat connection on $E_{b_{2}}$
reduces to a unique homomorphism
\begin{equation}
\left\{ \pi_{1}\left(E_{b_{2}}\right)\rightarrow\mathfrak{t}_{E_{8}}=\left(S^{1}\right)^{8}\subseteq\left(\mathbb{C}^{\ast}\right)^{8}\right\} _{b_{2}\in B_{2}}.\label{eq:NS}
\end{equation}

The exact sequence
\[
0\rightarrow\pi_{B_{2}}^{\ast}T_{B_{2}}^{\ast}\rightarrow T_{V_{3}}^{\ast}\rightarrow T_{V_{3}/B_{2}}^{\ast}\rightarrow0
\]
yields the equality

\[
\det T_{V_{3}}^{\ast}=\pi_{B_{2}}^{\ast}\left(\det T_{B_{2}}^{\ast}\right)\otimes K_{V_{3}/B_{2}}.
\]
Since $V_{3}$ is Calabi-Yau, $\det T_{V_{3}}^{\ast}$ is the trivial
line bundle. Thus the $\mathbb{Z}_{2}$-action is either trivial on
both of the right-hand factors or non-trivial on both. Castelnuovo's
Rationality Criterion implies that there are no freely acting involutions
on the del Pezzo surface $B_{2}$. Linearizing the action of $B_{2}$
around fixpoints yields the conclusion that either $\beta_{2}$ acts
with finite fixpoint set and the action on relative one-forms in $K_{V_{3}/B_{2}}$
is
\[
\frac{dx}{y}\mapsto\frac{dx}{y}
\]
or has a fixed curve along which the action on relative one-forms
in $K_{V_{3}/B_{2}}$ is
\[
\frac{dx}{y}\mapsto\frac{-dx}{y}.
\]
We will next see that the existence of an $F$-theory dual implies
that the action of $\beta_{2}$ has only finite fixpoint set.

\subsection{$F$-theory model}

The starting point in the construction of $F$-theory is an elliptically
fibered Calabi-Yau fourfold $W_{4}/B_{3}$ over a smooth Fano threefold
$B_{3}$, itself fibered over the Heterotic $B_{2}$ with rational
fibers. The $F$-theory model must be endowed with equivariant involutions
\[
\begin{array}{ccc}
W_{4} & \overset{\tilde{\beta}_{4}}{\longrightarrow} & W_{4}\\
\downarrow &  & \downarrow\\
B_{3} & \overset{\beta_{3}}{\longrightarrow} & B_{3}\\
\downarrow &  & \downarrow\\
B_{2} & \overset{\beta_{2}}{\longrightarrow} & B_{2}.
\end{array}
\]
Duality then requires that $\beta_{3}$ acts freely on the smooth
anti-canonical divisor $S_{\mathrm{GUT}}\subseteq B_{3}$. Therefore
$S_{\mathrm{GUT}}$ is a $K3$-surface and the quotient under the
free $\mathbb{Z}_{2}$-action is an Enriques surface. Since $S_{\mathrm{GUT}}$
is ample, the involution $\beta_{3}$ can have only finite fixpoint
set, a fact that in turn implies that the Heterotic $\beta_{2}$ can
have only finite fixpoint set. As we have seen above, this implies
that the involution $\tilde{\beta}_{3}$ on the Heterotic $V_{3}$
will have to act as 
\begin{equation}
\frac{dx}{y}\mapsto\frac{dx}{y}\label{eq:inv5}
\end{equation}
on relative one-forms in $K_{V_{3}/B_{2}}$.

The short exact sequence
\begin{equation}
0\rightarrow\pi_{B_{3}}^{\ast}T_{B_{3}}^{\ast}\rightarrow T_{\tilde{W}_{4}}^{\ast}\rightarrow T_{\tilde{W}_{4}/B_{3}}^{\ast}\rightarrow0\label{eq:tbseq}
\end{equation}
of cotangent spaces to a crepant resolution $\tilde{W}_{4}/B_{3}$
of $W_{4}/B_{3}$ yields an equation 
\begin{equation}
\det T_{\tilde{W}_{4}}^{\ast}=\pi_{B_{3}}^{\ast}\left(\det T_{B_{3}}^{\ast}\right)\otimes K_{\tilde{W}_{4}/B_{3}}\label{eq:prod}
\end{equation}
where $\det T_{B_{3}}^{\ast}$ has a meromorphic section $\omega_{\mathrm{GUT}}$
with no zeros and simple pole along the $K3$-surface $S_{\mathrm{GUT}}\subseteq B_{3}$.
Again the $\mathbb{Z}_{2}$-action is either trivial on both of the
right-hand factors or non-trivial on both. The residue of $\omega_{\mathrm{GUT}}$
is a nowhere vanishing holomorphic two-form on $S_{\mathrm{GUT}}$.
Since $\beta_{3}$ must act freely one concludes that

\begin{equation}
\beta_{3}^{\ast}\left(\omega_{\mathrm{GUT}}\right)=-\omega_{\mathrm{GUT}}.\label{eq:GUT}
\end{equation}
So, in order that quotient $W_{4}^{\vee}$ be Calabi-Yau, (\ref{eq:prod})
implies that the involution $\tilde{\beta}_{4}$ must act as 
\begin{equation}
\frac{dx}{y}\mapsto-\frac{dx}{y}\label{eq:inv4}
\end{equation}
on relative one-forms in $K_{\tilde{W}_{4}/B_{3}}$, that is, the
involution $\tilde{\beta}_{4}$ will have to act as
\begin{equation}
\left(x,y\right)\mapsto\left(x,-y\right)\label{eq:inv2}
\end{equation}
on the Weierstrass form on the fibers of $W_{4}/B_{3}$. 

\subsection{Preserving duality of the $\mathbb{Z}_{2}$-quotients from Heterotic
to $F$-theory}

The duality is realized by the canonical replacement of the restriction
of the two bundles 
\[
F_{a}\oplus F_{b}
\]
to each elliptic fiber $E_{b_{2}}$ of $V_{3}/B_{2}$ by a union of
elliptically fibered rational surfaces
\begin{equation}
\left(dP_{a}\left(b_{2}\right)\cup dP_{b}\left(b_{2}\right)\right)\rightarrow\mathbb{P}_{\left[a',a''\right]}\left(b_{2}\right)\cup\mathbb{P}_{\left[b',b''\right]}\left(b_{2}\right)\label{eq:fiber}
\end{equation}
such that
\[
dP_{a}\left(b_{2}\right)\cap dP_{b}\left(b_{2}\right)=E_{b_{2}}.
\]
(\ref{eq:fiber}) is then a normal-crossing $K3$-surface elliptically
fibered over the union of two $\mathbb{P}^{1}$'s meeting at a point.
Taken together these normal-crossing $K3$-surfaces are the fibers
of a fibration sequence
\[
W_{4,0}\rightarrow B_{3,a}\cup B_{3,b}\rightarrow B_{2}
\]
with total space a normal-crossing Calabi-Yau fourfold. 

The above canonical replacement is permitted by three facts:

1) The Yang-Mills connections determine and are determined by their
restrictions to flat connections on each elliptic fiber $E_{b_{2}}$.

2) The Narasimhan-Seshadri theorem allows replacement of the flat
$E_{8}$-bundles 
\[
F_{a}\left(b_{2}\right)\oplus F_{b}\left(b_{2}\right)
\]
on $E_{b_{2}}$ with flat holomorphic $E_{8}^{\mathbb{C}}$-bundles
\[
F_{a}^{\mathbb{C}}\left(b_{2}\right)\oplus F_{b}^{\mathbb{C}}\left(b_{2}\right)
\]
where $E_{8}^{\mathbb{C}}$ is the algebraic group whose compact real
form is $E_{8}$. (This is the point at which one of the two complexifications
of the $E_{8}$-bundles is chosen. As we shall show below, the the
$\mathbb{Z}_{2}$-action must incorporate a reversal of that choice
in order that the $F$-theory quotient retain $E_{8}$-symmetry.)

3) In Section 4.5 of \cite{Friedman} Friedman-Morgan-Witten give
us a classifying space for imbeddings of $E_{b_{2}}$ into a rational
elliptic surface $dP_{9}\left(b_{2}\right)$, each such corresponding
canonically by a theorem of E. Looijenga \cite{Looijenga} to an isomorphism
class of flat $E_{8}^{\mathbb{C}}$-bundles $F$ over $E_{b_{2}}.$

Namely one considers the family of `$dP_{9}$-hypersurfaces' 
\begin{align}
\label{44}
y^{2}&=4x^{3}-\left(g_{2}t^{4}-\beta_{1}st^{3}-\ldots-\beta_{4}s^{4}\right)x\\
\notag &\quad -\left(g_{3}t^{6}-\alpha_{2}s^{2}t^{4}-\ldots-\alpha_{6}s^{6}\right)
\end{align}
in $\mathbb{P}_{1,1,2,3}^{3}$ parametrized by homogeneous forms $\alpha_{j}$
and $\beta_{j}$ of weight $j$ in a weighted projective space $\mathbb{P}_{1,2,2,3,3,4,4,5,6}^{8}$.
Fixing the values of $\alpha_{j}$ and $\beta_{j}$ we think of the
solution set of $\eqref{44}$ as a rational hypersurface in $\mathbb{P}_{1,1,2,3}^{3}$
with distinguished pencil 
\begin{equation}
\gamma s+\delta t=0.\label{444}
\end{equation}
The given elliptic curve $E_{b_{2}}$ sits in each $dP_{9}\left(b_{2}\right)$
in $\eqref{44}$ as the solution set to the equation 
\[
s=0.
\]
The associated sum of eight flat line bundles on $E_{b_{2}}$ is given
by the morphism
\[
H_{0}^{2}\left(dP_{9}\left(b_{2}\right);\mathbb{Z}\right)\rightarrow\mathrm{Pic}^{0}\left(E_{b_{2}}\right)
\]
where $H_{0}^{2}\left(dP_{9}\left(b_{2}\right);\mathbb{Z}\right)$
is the space of algebraic cycles on $dP_{9}\left(b_{2}\right)$ whose
\linebreak intersection number with $E_{b_{2}}$ is zero. The intersection pairing
on\linebreak $H_{0}^{2}\left(dP_{9}\left(b_{2}\right);\mathbb{Z}\right)$ is
that of the $E_{8}$-Dynkin diagram.

Now (\ref{44}) should be thought of as defining a fiber of a bundle
or `stack' over the moduli stack $\mathfrak{M}_{1,1}$ of elliptic
curves given by their Weierstrass form. $\mathfrak{M}_{1,1}$ has
a covering involution
\begin{equation}
\left(\left(x,y\right),\left[s,t\right]\right)\mapsto\left(\left(x,-y\right),\left[-s,t\right]\right)\label{eq:log}
\end{equation}
that lifts to $dP_{9}$-hypersurface involution
\begin{equation}
\begin{array}{c}
\left(\left[s,t\right],\left(x,y\right),\left[\beta_{1},\alpha_{2},\beta_{2},\alpha_{3},\beta_{3},\alpha_{4},\beta_{4},\alpha_{5},\alpha_{6}\right]\right)\\
\shortdownarrow\\
\left(\left[-s,t\right],\left(x,-y\right),\left[-\beta_{1},\alpha_{2},\beta_{2},-\alpha_{3},-\beta_{3},\alpha_{4},\beta_{4},-\alpha_{5},\alpha_{6}\right]\right)
\end{array}\label{eq:parity}
\end{equation}
since the parity of the coefficients $\alpha_{i}$ and $\beta_{j}$
in the weighted projective space $\mathbb{P}_{1,2,2,3,3,4,4,5,6}^{8}$
is matched by their degree. On the other hand (\ref{eq:inv2}) induced on $W_{4,0}/B_{3}$ and
(\ref{eq:inv5}) induced on $V_{3}/B_{2}$ taken together will imply that at $s=0$  a `logarithmic transform' 
(\ref{eq:log}) must be incorporated into the quotienting action along the
elliptic fiber $E_{b_{2}}$. Only in this way can the two components
of the quotient of 
\[
dP_{a}\left(b_{2}\right)\cup dP_{b}\left(b_{2}\right)
\]
by the action of the involution retain the structure of $dP_{9}$'s
without multiple fibers. 

The final step in passing from the Heterotic model to the $F$-theory
model $W_{4}/B_{2}$ is then obtained by smoothing each normal-crossing
$K3$-surface $dP_{a}\left(b_{2}\right)\cup dP_{b}\left(b_{2}\right)$
to obtain a smooth $K3$-surface fibered over the smoothing of $\mathbb{P}_{\left[a',a''\right]}\left(b_{2}\right)\cup\mathbb{P}_{\left[b',b''\right]}\left(b_{2}\right)$
to the fiber of $B_{3}/B_{2}$ over $b_{2}$. Thus one obtains a fibration
sequence
\[
W_{4}\rightarrow B_{3}\rightarrow B_{2}
\]
that consists over generic $b_{2}\in B_{2}$ of a smooth $K3$-surface
elliptically fibered over the $\mathbb{P}^{1}$-fiber of $B_{3}/B_{2}$. 

\subsection{Compatibility of $\mathbb{Z}_{2}$-actions}

The involution $\tilde{\beta}_{4}/\beta_{3}$ on $W_{4}/B_{3}$ with
Calabi-Yau quotient $W_{4}^{\vee}/B_{3}^{\vee}$ must specialize to
an involution $\tilde{\beta}_{4,0}/\beta_{3,0}$ on the semi-stable
limit $W_{4,0}/\left(B_{3,a}\cup B_{3,b}\right)$.

Key to understanding the $\mathbb{Z}_{2}$-action on $W_{4,0}$ is
attending to the $\mathbb{Z}_{2}$-actions on the normal crossing
$K3$-surfaces
\[
\left(dP_{a}\left(b_{2}\right)\cup dP_{b}\left(b_{2}\right)\right)\rightarrow\mathbb{P}_{\left[a',a''\right]}\left(b_{2}\right)\cup\mathbb{P}_{\left[b',b''\right]}\left(b_{2}\right)
\]
over the set $Orb$ of fixpoints $b_{2}$ of the involution $\beta_{2}$.
Since the action of the involution $\tilde{\beta}$ on $V_{3}$ is
free, it must restrict to translation by a non-zero half-period on
the elliptic fiber $E_{b_{2}}$ over the fixed $b_{2}$. 

Over each point of $Orb$, the involution $\tilde{\beta}_{4,0}/\beta_{3,0}$
induces compatible involutions on each of the two components $dP_{9,a}$
and $dP_{9,b}$ of the fiber. The $K3$-surface over a point $b_{2}\in Orb$
induces an involution on each of the two $dP_{9}$-surfaces into which
it splits in the semi-stable limit. The involution must specialize
to translation by a given half-period $\delta$ on the fiber $E_{b_{2}}$,
the intersection of the two $dP_{9}$'s. 
\begin{lem}
\label{lem:i)-On-the F}i) On the $F$-theory fiber of the semi-stable
limit over a fixpoint of $\beta_{2}$, the involution 
\[
\left(\left[s,t\right],\left(x,y\right)\right)\mapsto\left(\left[-s,t\right],\left(x,-y\right)\right)
\]
in (\ref{eq:parity}) acts on each of the two $dP_{9}$'s. Therefore
a section of the canonical bundle of the normal-crossing $K3$-surface
away from its singular locus is given by setting $t=1$ and writing
the holomorphic two form 
\[
ds\wedge\frac{dx}{y}.
\]
Therefore this form is locally invariant under the action of $\tilde{\beta}_{4}$
on $W_{4,0}$ since both factors $ds$ and $\frac{dx}{y}$ are anti-invariant.

ii) On a small analytic neighborhood of $E_{b_{2}}$, the involution is
given on each $dP_{9}$ by the so-called `logarithmic transformation'
\begin{equation}
\left(\left[s,t\right],\left(x,y\right)\right)\mapsto\left(\left[-s,t\right],\left(\left(x,y\right)+\delta\right)\right).\label{eq:parity2}
\end{equation}
However the canonical bundle near the crossing locus of the two $dP_{9}$'s
on the $F$-theory side is represented by two-form
\[
d\log s\wedge\frac{dx}{y}
\]
on each local component\footnote{When smooth surfaces specialize to normal crossing surfaces, a holomorphic
section of their canonical bundle specializes to a meromorphic section
with logarithmic pole with cancelling residues on each of the two
local components.} whose residue is the holomorphic one-form on the Heterotic side.
is invariant under the action of $\tilde{\beta}_{4}$ on $W_{4,0}$
and so the its residual one-form 
\[
\frac{dx}{y}
\]
is invariant under the induced action of $\tilde{\beta}_{3}$ as required
on the Heterotic side.
\end{lem}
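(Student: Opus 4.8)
The plan is to treat the Lemma as a pair of explicit verifications carried out in two complementary regimes of the normal-crossing $K3$-fiber over a point $b_2\in Orb$: the region away from the crossing elliptic curve $E_{b_2}$, where the involution is the sign-change on the Weierstrass coordinate, and a small analytic neighborhood of $E_{b_2}$, where it is realized as a logarithmic transform. Before computing, I would first confirm that the involution $\tilde\beta_{4,0}$ stabilizes each of the two components $dP_{9,a}$ and $dP_{9,b}$ rather than interchanging them. This follows because the base involution $[s,t]\mapsto[-s,t]$ in (\ref{eq:parity}) fixes the crossing point $s=0$ and stabilizes each of the two $\mathbb{P}^1$'s $\mathbb{P}_{[a',a'']}$ and $\mathbb{P}_{[b',b'']}$ separately---consistent with the requirement, recorded in the Introduction, that the $\mathbb{Z}_2$-action preserve each of the two $E_8$-bundles $F_a$ and $F_b$ (hence each associated $dP_9$) rather than swap them.

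For part (i), I would set $t=1$ and write the degenerate holomorphic two-form on each smooth component away from the crossing as $ds\wedge\tfrac{dx}{y}$. The computation is then immediate: the base factor transforms by $ds\mapsto d(-s)=-ds$ and the relative fiber factor by $\tfrac{dx}{y}\mapsto\tfrac{dx}{-y}=-\tfrac{dx}{y}$, exactly the sign-change (\ref{eq:inv2}). Since both factors are skew, their wedge is invariant, which is precisely the invariance needed for the quotient $W_4^\vee$ to be Calabi-Yau. The only point to be checked with care here is that $ds\wedge\tfrac{dx}{y}$ really does extend the holomorphic volume form of the nearby smooth $K3$ on the locus away from the singular crossing; this is standard for a one-parameter smoothing and I would cite it rather than reprove it.

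For part (ii), the subtlety is that near $E_{b_2}$ the involution cannot be the sign-change $(x,y)\mapsto(x,-y)$, because $\tilde\beta$ acts freely on $V_3$ and therefore restricts to a fixed-point-free involution of the crossing curve $E_{b_2}$---necessarily translation by a nonzero half-period $\delta$. This is exactly the logarithmic transform (\ref{eq:parity2}). Here I would represent the canonical bundle by the meromorphic two-form $d\log s\wedge\tfrac{dx}{y}$ with logarithmic pole along the crossing (the degeneration behaviour noted in the footnote), and verify its invariance factor by factor: $d\log s=\tfrac{ds}{s}\mapsto\tfrac{-ds}{-s}=\tfrac{ds}{s}=d\log s$ is invariant under $s\mapsto-s$, while $\tfrac{dx}{y}$ is invariant under translation by $\delta$ because it is the translation-invariant (N\'eron) differential of the elliptic curve. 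Taking the Poincar\'e residue along $s=0$ then yields that the residual one-form $\tfrac{dx}{y}$ on $E_{b_2}$ is invariant under the induced Heterotic involution $\tilde\beta_3$, which is exactly the required sign (\ref{eq:inv5}).

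The main obstacle is not either computation in isolation but the reconciliation of the two regimes: one must check that the single global involution $\tilde\beta_{4,0}$ of the normal-crossing $K3$ simultaneously appears as the skew sign-change far from $E_{b_2}$ and as the (invariant) logarithmic transform near $E_{b_2}$, so that the two local forms $ds\wedge\tfrac{dx}{y}$ and $d\log s\wedge\tfrac{dx}{y}$ are two descriptions of one section of the (relative log-)canonical bundle and the invariance in part (i) is compatible with the invariance in part (ii). Concretely, this amounts to verifying that the logarithmic transform (\ref{eq:log}) interpolates correctly between the two pictures---matching $(x,y)\mapsto(x,-y)$ at a generic fiber with translation by $\delta$ at $E_{b_2}$---so that the skew-times-skew invariance and the invariant-times-invariant invariance agree on the overlap. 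This is precisely the point at which the identification of (\ref{eq:inv2}) with (\ref{eq:inv5}) via the logarithmic transform does the real work.
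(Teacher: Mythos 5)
There is a genuine gap: your argument assumes, rather than derives, the form of the involution---specifically the nontrivial action $\left[s,t\right]\mapsto\left[-s,t\right]$ on the base of each $dP_{9}$'s elliptic fibration. The constraint you invoke (freeness of $\tilde{\beta}$ on $V_{3}$ forces translation by a nonzero half-period $\delta$ on $E_{b_{2}}$) only pins down the action on the crossing curve itself; it is equally consistent with an involution that acts on each $dP_{9}$ as pure fiberwise translation by $\delta$, i.e.\ trivially on the base $\mathbb{P}^{1}$. That alternative---the paper's ``Possibility 1)''---also preserves the flat $E_{8}$-bundles, since flat bundles on elliptic curves are invariant under translation, so nothing in your proposal excludes it. The paper's proof is structured precisely around this dichotomy: either the $E_{8}$-bundles are pasted to themselves by the identity isomorphism under translation by $\delta$, or the pasting incorporates (\ref{eq:parity}). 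It rules out the first because a trivial action on the $\mathbb{P}^{1}$'s would make the whole fiber of $B_{3}/B_{2}$ over the fixpoint $b_{2}$ pointwise invariant under $\beta_{3}$; since $S_{\mathrm{GUT}}$ is an ample (anti-canonical) divisor it meets every such fiber, so $\beta_{3}$ would have fixpoints on $S_{\mathrm{GUT}}$, destroying the free action needed for the Enriques quotient and hence for the $F$-theory dual. This $F$-theory-side global input is indispensable and absent from your proof; without it, parts (i) and (ii)---which assert the \emph{forms} (\ref{eq:parity}) and (\ref{eq:parity2}) of the involution, not merely the invariance of the two-forms under them---are not established. Your preliminary step is in fact circular on this point, since it justifies component stabilization by appealing to the base involution $\left[s,t\right]\mapsto\left[-s,t\right]$, which is what needs proving.

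The computations you do carry out (skew times skew for $ds\wedge\frac{dx}{y}$ under (\ref{eq:parity}); invariance of $d\log s\wedge\frac{dx}{y}$ under (\ref{eq:parity2}); the residue giving (\ref{eq:inv5})) coincide with the paper's and are correct. But note that what you call the ``main obstacle''---reconciling the two regimes---is exactly the step you leave as a description rather than an argument. The paper does this work: it observes that after removing the multiple fiber from the quotient by (\ref{eq:parity2}) and the fiber $E_{i}$ from the quotient by (\ref{eq:parity}), the remaining open surfaces are isomorphic, so that the (\ref{eq:parity2})-invariant meromorphic form $d\ln s\wedge\frac{dx}{y}$ must be the extension of the (\ref{eq:parity})-invariant holomorphic two-form, and only then does the residue argument transfer invariance to the Heterotic side. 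A complete proof therefore needs both the exclusion argument and this gluing/extension argument, neither of which is supplied in your proposal.
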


\begin{proof}
Since flat bundles on elliptic curves are invariant under translation,
one sees by (\ref{44}) that there are only two possibilities:

1) The $E_{8}$-bundles are pasted to themselves according to the
identity isomorphism induced by translation by the half-period $\delta$.

2) The pasting of each $E_{8}$-bundle incorporates the automorphism
corresponding to the involution
\[
\left(\left[s,t\right],\left(x,y\right)\right)\mapsto\left(\left[-s,t\right],\left(x,-y\right)\right)
\]
 given in (\ref{eq:parity}) on each of the two $dP_{9}$'s.

Possibility 1) is impossible since it would imply that the fiber of
the fibration $B_{3}/B_{2}$ over the fixpoint would be pointwise
invariant under the $\mathbb{Z}_{2}$-action. That would in turn imply
that the $\mathbb{Z}_{2}$-action on the smooth anti-canonical divisor
$S_{\mathrm{GUT}}\subseteq B_{3}$ would also have fixpoints. This
last eliminates the possibility of an $F$-theory quotient.

Possibility 2) however implies that $\mathbb{Z}_{2}$-action on $B_{3}$
has finite fixpoint set thereby allowing a free action on $S_{\mathrm{GUT}}$.
The $\mathbb{Z}_{2}$-action on the two $dP_{9}$-fibers over $b_{2}\in Orb$
is then given on (\ref{44}) by translation by the distinguished half-period
of the common fiber $\left\{ s_{a}=s_{b}=0\right\} $ composed with
(\ref{eq:parity}). For any half-period $\delta$ of $E_{b_{2}}$,
there is defined a so-called `logarithmic transform' on the $dP_{9}$-fiber,
that is, an involution that produces in the quotient a fiber of multiplicity
two over $\left\{ s^{2}=0\right\} $. Setting $t=1,$ the involution
(\ref{eq:parity2}) takes the two-form

\[
ds\wedge\frac{dx}{y}
\]
to minus itself. On the other hand, the involution (\ref{eq:parity})
leaves this same two-form invariant.

However if one removes the multiple fiber ${E_{b_{2}}/{\left\{ \left(x,y\right)\equiv\left(x,y\right)+\delta\right\}} }$
from the quotient of (\ref{eq:parity2}) and removes the fiber $E_{b_{2}}$
from the quotient $\frac{dP_{9}}{\left\{ \left[s,t\right]\equiv\left[-s,t\right]\right\} }$,
the remaining open surfaces are isomorphic. Then $\frac{dP_{9}}{\left\{ \left[s,t\right]\equiv\left[-s,t\right]\right\} }$
corresponds to a flat $E_{8}$-bundle on $E_{b_{2}}/\left\{ \left(x,y\right)\equiv\left(x,y\right)+\delta\right\} )$
that pulls back to a flat $E_{8}$-bundle on $E_{b_{2}}$ that is invariant
under translation by $\delta$. Along $s=0$ the meromorphic two-form
\begin{equation}
d\ln s\wedge\frac{dx}{y}\label{eq:logdiff}
\end{equation}
is invariant under (\ref{eq:parity2}) so that it must be the one
that extends the the invariant two-form (\ref{eq:parity}). Therefore
its residue, $dx/y$ is also invariant under the $V_{3}$-involution
$\tilde{\beta}_{3}$. Said otherwise the quotient of the $\mathbb{Z}_{2}$-action
yields the order-$2$ logarthmic transform of each of the two components
\[
\left(dP_{9}^{\vee}/\mathbb{\mathbb{P}}_{a,\left[s_{a}^{2},t_{a}^{2}\right]}\right)\cup\left(dP_{9}^{\vee}/\mathbb{\mathbb{P}}_{b,\left[s_{b}^{2},t_{t}^{2}\right]}\right)
\]
yielding the $\mathbb{Z}_{2}$-action of 
\[
\frac{dx}{y}\mapsto\frac{dx}{y}
\]
on $V_{3}/B_{2}$. This action is necessary so that the Heterotic
quotient be a Calabi-Yau threefold. Simultaneously the action is consistent
with the $\mathbb{Z}_{2}$-action of 
\begin{equation}
\frac{dx}{y}\mapsto-\frac{dx}{y}\label{eq:-y}
\end{equation}
on $W_{4}/B_{3}$ that is necessary so that the $F$-theory quotient
be a Calabi-Yau fourfold.
\end{proof}
Lemma \ref{lem:i)-On-the F} is somewhat remarkable in its implications
for the $F$-theory dual. Since the involution $\beta_{2}$ on $B_{2}$
has only finite fixpoint set, it acts with eigenvalue $\left(+1\right)$
on the canonical bundle of $B_{2}$ . So Lemma \ref{lem:i)-On-the F}
says that for an $F$-theory dual with orbifold $\mathbb{Z}_{2}$
fundamental group\footnote{This will be useful for Wilson-line symmetry breaking.},
$\tilde{\beta}_{4}$ must act on the Weierstrass form of fibers of
the $F$-theory dual by 
\[
\left(x,y\right)\mapsto\left(x,-y\right).
\]
Only in that way does the $\mathbb{Z}_{2}$-quotient become a Calabi-Yau
fourfold.

\section{Retaining $E_{8}$-symmetry \label{sec:Retaining--symmetry3}}

As we have just shown, thanks to the nature of the logarithmic transform
above the fixpoints of $\beta_{2}$, the quotient
\[
\frac{W_{4,0}}{\tilde{\beta}_{4,0}}
\]
by the involution $\tilde{\beta}_{4,0}$ (induced on $W_{4,0}$ by
the involution $\tilde{\beta}_{4}$ on $W_{4}$) retains the structure
of the union of two $dP_{9}$'s (without multiple fibers). Therefore
quotienting the Heterotic model $V_{3}/B_{2}$ by $\tilde{\beta}_{3}/\beta_{2}$
does not break $E_{8}$-symmetry.

However (\ref{eq:inv2}) seems to imply that the involution $\tilde{\beta}_{4}/\beta_{3}$
does break $E_{8}$-symmetry on any crepant resolution $\tilde{W}_{4}/B_{3}$
of $W_{4}/B_{3}$. For example, in the idealized `limit' example where
the fibers of $W_{4}/B_{3}$ over points of $S_{\mathrm{GUT}}$ have
$E_{8}$-singular fibers
\[
y^{2}=x^{3}+a_{0}z^{5}
\]
the involution (\ref{eq:inv2}) sends each $E_{8}$-root as represented
by the exceptional fibers of the crepant resolution to its negative.
So the question becomes ``How can one endow the $F$-theory model
with an involution that leaves $E_{8}$ untouched but interchanges
the $E_{8}$-roots with their negatives?''

We propose that the answer lies with the interchange, over each point
$b_{2}\in B_{2}$, of the two complex structures 
\[
\left.\left(F_{a}^{\mathbb{C}}\oplus F_{b}^{\mathbb{C}}\right)\right|_{b_{2}}
\]
on the elliptic curve $E_{b_{2}}$ associated to the same flat real
$E_{8}$-bundles 
\[
\left.\left(F_{a}\oplus F_{b}\right)\right|_{b_{2}}.
\]
In support of this proposal we cite the following Lemma.
\begin{lem}
\label{lem:For-each-root}For each root $\rho$ of the compact real
form $G_{\mathbb{R}}$ of a simple algebraic group $G_{\mathbb{C}}$,
the involution $\iota$ exchanges the root space $\mathfrak{l}_{\rho}\subseteq g_{\mathbb{C}}$
with the root space $\mathfrak{l}_{-\rho}\subseteq g_{\mathbb{C}}$
and so acts as minus the identity on $\left[\mathfrak{l}_{\rho},\mathfrak{l}_{-\rho}\right]$
.
\end{lem}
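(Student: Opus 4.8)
The plan is to reduce everything to the defining properties of $\iota$ as the conjugate-linear complex conjugation fixing the compact real form, combined with the classical fact that the roots of a compact group take purely imaginary values on the real Cartan subalgebra. Write $\mathfrak{g}_{\mathbb{C}}=\mathfrak{g}_{\mathbb{R}}\oplus i\,\mathfrak{g}_{\mathbb{R}}$, so that for $X=A+iB$ with $A,B\in\mathfrak{g}_{\mathbb{R}}$ one has $\iota(X)=A-iB$. Thus $\iota$ is additive, antilinear over $\mathbb{C}$, and respects brackets in the sense that $\iota[Y,Z]=[\iota Y,\iota Z]$, since the bracket of real elements is real. Fix a maximal torus of $G_{\mathbb{R}}$ with real Lie algebra $\mathfrak{t}_{\mathbb{R}}$, so that $\mathfrak{h}=\mathfrak{t}_{\mathbb{R}}\oplus i\,\mathfrak{t}_{\mathbb{R}}$ is the complexified Cartan, and $\iota$ fixes $\mathfrak{t}_{\mathbb{R}}$ pointwise while negating $i\,\mathfrak{t}_{\mathbb{R}}$.

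First I would establish the exchange $\iota(\mathfrak{l}_{\rho})=\mathfrak{l}_{-\rho}$. Take $X_{\rho}\in\mathfrak{l}_{\rho}$ and any $H\in\mathfrak{t}_{\mathbb{R}}$. Because $\mathrm{ad}(H)$ is skew-symmetric with respect to an $\mathrm{Ad}$-invariant inner product on the compact algebra $\mathfrak{g}_{\mathbb{R}}$, its eigenvalues are purely imaginary, so $\rho(H)\in i\mathbb{R}$ and $\overline{\rho(H)}=-\rho(H)$. Then, using $\iota H=H$ and antilinearity,
\[
[H,\iota X_{\rho}]=\iota[H,X_{\rho}]=\iota\bigl(\rho(H)X_{\rho}\bigr)=\overline{\rho(H)}\,\iota X_{\rho}=(-\rho)(H)\,\iota X_{\rho},
\]
which exhibits $\iota X_{\rho}$ as a $(-\rho)$-weight vector, i.e. $\iota(\mathfrak{l}_{\rho})\subseteq\mathfrak{l}_{-\rho}$. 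Since $\iota^{2}=\mathrm{id}$, applying the same to $-\rho$ gives the reverse inclusion, so $\iota$ exchanges the two root spaces.

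For the second assertion, recall that $[\mathfrak{l}_{\rho},\mathfrak{l}_{-\rho}]$ is the one-dimensional subspace of $\mathfrak{h}$ spanned by the coroot $h_{\rho}=[X_{\rho},X_{-\rho}]$. Applying $\iota$ and using the exchange just proved,
\[
\iota(h_{\rho})=[\iota X_{\rho},\iota X_{-\rho}]\in[\mathfrak{l}_{-\rho},\mathfrak{l}_{\rho}]=[\mathfrak{l}_{\rho},\mathfrak{l}_{-\rho}],
\]
so $\iota(h_{\rho})$ is again a multiple of $h_{\rho}$. To pin the multiple to $-1$ I would locate $h_{\rho}$ inside the $(-1)$-eigenspace $i\,\mathfrak{t}_{\mathbb{R}}$ of $\iota$: since $\rho(h_{\rho})=2$ is a nonzero real number while $\rho$ is purely imaginary on $\mathfrak{t}_{\mathbb{R}}$ (hence real on $i\,\mathfrak{t}_{\mathbb{R}}$), the coroot must lie in $i\,\mathfrak{t}_{\mathbb{R}}$; equivalently $i h_{\rho}$ generates the circle subgroup of the compact maximal torus and so lies in $\mathfrak{t}_{\mathbb{R}}$. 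Writing $h_{\rho}=iH'$ with $H'\in\mathfrak{t}_{\mathbb{R}}$ and using antilinearity gives $\iota(h_{\rho})=-i\,\iota(H')=-iH'=-h_{\rho}$, the asserted action by minus the identity on the coroot. Equivalently, one may normalize a Weyl basis so that $\iota X_{\alpha}=-X_{-\alpha}$, whence $\iota(h_{\rho})=[\,{-}X_{-\rho},{-}X_{\rho}\,]=[X_{-\rho},X_{\rho}]=-h_{\rho}$ directly.

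The obstacle to watch is conceptual rather than computational: $\iota$ is conjugate-linear, not $\mathbb{C}$-linear, so the sign reversals arise precisely from combining antilinearity with the imaginary-valuedness of roots on $\mathfrak{t}_{\mathbb{R}}$, and one must resist reading $\iota$ as a Weyl-group element (the point stressed in the $SU(2)$ discussion). The only genuinely substantive input beyond formal manipulation is the placement of the coroot $h_{\rho}$ in $i\,\mathfrak{t}_{\mathbb{R}}$, i.e. the reality pattern of roots on the compact Cartan; everything else is bookkeeping with the bracket-compatibility of $\iota$.
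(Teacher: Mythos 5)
Your proof is correct, but it takes a genuinely different route from the paper's. The paper does not argue abstractly in $\mathfrak{g}_{\mathbb{C}}$: it reduces to rank one, invoking the immersions $SU\left(2\right)\rightarrow G_{\mathbb{R}}$ and $SL\left(2;\mathbb{C}\right)\rightarrow G_{\mathbb{C}}$ attached to the root pair $\left\{ \rho,-\rho\right\} $ (diagram (\ref{eq:equi})), and then verifies the claim by explicit matrix computation in $\mathfrak{sl}\left(2;\mathbb{C}\right)$: with the Cartan spanned by $\mathrm{diag}\left(i,-i\right)$ the root vectors are $\mathbf{j}\pm i\,\mathbf{k}$, visibly swapped by entry-wise complex conjugation, and the roots are purely imaginary, hence negated. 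You instead work uniformly in $\mathfrak{g}_{\mathbb{C}}$: conjugate-linearity and bracket-compatibility of $\iota$, together with skew-adjointness of $\mathrm{ad}\left(H\right)$ for $H\in\mathfrak{t}_{\mathbb{R}}$, give $\left[H,\iota X_{\rho}\right]=-\rho\left(H\right)\iota X_{\rho}$ in one line. What the paper's reduction buys is concreteness and continuity with the rest of the text, where the same root-$SU\left(2\right)$ picture inside $E_{8}$ reappears (Section \ref{sec:Equivariant-crepant-resolution5}); what your version buys is that the exchange of root spaces requires no case-checking and no auxiliary claim that $\iota$ restricts to the standard conjugation on the embedded $SL\left(2;\mathbb{C}\right)$ --- the naturality you exploit is exactly what the paper's reduction implicitly assumes.

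One step is stated too quickly, though it is patchable with ingredients you already have. From $\rho\left(h_{\rho}\right)=2\in\mathbb{R}$ and imaginarity of $\rho$ on $\mathfrak{t}_{\mathbb{R}}$ you may conclude, writing $h_{\rho}=H_{1}+iH_{2}$ with $H_{1},H_{2}\in\mathfrak{t}_{\mathbb{R}}$, only that $\rho\left(H_{1}\right)=0$ and $i\rho\left(H_{2}\right)=2$; this does not by itself force $H_{1}=0$, since for rank $\geq2$ one may add to an element of $i\mathfrak{t}_{\mathbb{R}}$ any nonzero vector of $\ker\rho\cap\mathfrak{t}_{\mathbb{R}}$ without changing the value of $\rho$. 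The quickest repair stays inside your own argument: you have already shown $\iota\left(h_{\rho}\right)=c\,h_{\rho}$ for some scalar $c$, so apply $\rho$ to both sides; since $\rho\left(\iota h_{\rho}\right)=\rho\left(H_{1}\right)-i\rho\left(H_{2}\right)=-2$, you get $2c=-2$, i.e. $c=-1$, with no need to locate $h_{\rho}$ at all. (Your fallbacks --- that $ih_{\rho}$ generates the circle subgroup of the compact torus, or the Weyl-basis normalization $\iota X_{\alpha}=-X_{-\alpha}$ --- are both standard and also close the argument, but note they are imported structure-theory facts rather than consequences of $\rho\left(h_{\rho}\right)=2$ alone.)
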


\begin{proof}
For any pair of a root and its negative, consider the associated immersions
\begin{equation}
\begin{array}{ccc}
SU\left(2\right) & \rightarrow & G_{\mathbb{R}}\\
\downarrow &  & \downarrow\\
SL\left(2;\mathbb{C}\right) & \rightarrow & G_{\mathbb{C}},
\end{array}.\label{eq:equi}
\end{equation}
It will suffice to show the assertion for the realization of the compact
real form $SU\left(2\right)$ as the unit quaternions, its Lie algebra
as the real vector space corresponding to the imaginary quaternions
and the Lie algebra $\mathfrak{sl}\left(2;\mathbb{C}\right)$ of the
complex algebraic group $SL\left(2;\mathbb{C}\right)$. 

(\textit{For physicists only}) That is, it will suffice to show the
assertion for the roots of the compact real form $SU(2)$ with real
Lie algebra the trace-zero hermitian $2\times2$ matrices with basis
\begin{equation}
T_{3}:=\left(\begin{array}{cc}
1 & 0\\
0 & -1
\end{array}\right),-T_{2}:=\left(\begin{array}{cc}
0 & i\\
-i & 0
\end{array}\right),T_{1}:=\left(\begin{array}{cc}
0 & 1\\
1 & 0
\end{array}\right).\label{eq:real}
\end{equation}
considered as the real subspace of the complex Lie algebra $\mathfrak{sl}\left(2,\mathbb{C}\right)$
of trace-zero $2\times2$ matrices with basis given by adding respective
imaginary parts
\[
\left(\begin{array}{cc}
i & 0\\
0 & -i
\end{array}\right),\left(\begin{array}{cc}
0 & -1\\
1 & 0
\end{array}\right),\left(\begin{array}{cc}
0 & i\\
i & 0
\end{array}\right).
\]
With respect to the Cartan subalgebra generated by $T_{3}$ the root
spaces are given by the eigenvectors
\[
T_{1}\text{\textpm}i\text{·}T_{2}
\]
with real eigenvalues. These are exchanged by the action of Hermitian
conjugation.

(\textit{For mathematicians only}) That is, it will suffice to show
the assertion for the roots of the compact real form $SU(2)$ with
real Lie algebra the trace-zero skew-hermitian $2\times2$ matrices
with basis
\[
\mathbf{\mathbf{i}:=}\left(\begin{array}{cc}
i & 0\\
0 & -i
\end{array}\right),\mathbf{j}:=\left(\begin{array}{cc}
0 & 1\\
-1 & 0
\end{array}\right),\mathbf{k}:=\left(\begin{array}{cc}
0 & i\\
i & 0
\end{array}\right)
\]
considered as the real subspace of the complex Lie algebra $\mathfrak{sl}\left(2,\mathbb{C}\right)$
of trace-zero $2\times2$ matrices with basis given by adding respective
imaginary parts
\[
\left(\begin{array}{cc}
-1 & 0\\
0 & 1
\end{array}\right),\left(\begin{array}{cc}
0 & i\\
-i & 0
\end{array}\right),\left(\begin{array}{cc}
0 & -1\\
-1 & 0
\end{array}\right).
\]
With respect to the Cartan subalgebra generated by $\left(\begin{array}{cc}
i & 0\\
0 & -i
\end{array}\right)$ the root spaces are given by the eigenvectors 
\[
\left(\begin{array}{cc}
0 & 1\\
-1 & 0
\end{array}\right)\text{\textpm}\left(\begin{array}{cc}
0 & -1\\
-1 & 0
\end{array}\right)=\mathbf{j}\pm i\text{·}\mathbf{k}.
\]
These are then exchanged by the action of complex conjugation, and
the roots are purely imaginary and so go to minus themselves under
complex conjugation. 
\end{proof}
Said otherwise, we retain $E_{8}$-symmetry under the $\mathbb{Z}_{2}$-action
on the $F$- theory side by incorporating the involution $\iota$
into the $\mathbb{Z}_{2}$-action. In this way, we can admit the action
(\ref{eq:inv4}) that forces the reversal of choice of Weyl chamber
without breaking the symmetry with respect to the real group $E_{8}$
or with respect to the real group $SU\left(5\right)_{gauge}$.

Our claim is therefore that, in order to construct the $F$-theory
dual of a Heterotic theory in which $E_{8}$-symmetry is preserved
under a $\mathbb{Z}_{2}$-action, the quotient $F$-theory model can
only be endowed with initial $E_{8}$-symmetry if the $\mathbb{Z}_{2}$-action
incorporates the reversal of the choice of Weyl chamber. The reason
is that the choice of Weyl chamber is used to identify a system of
positive simple roots with exceptional components of the crepant resolution
$\tilde{W}_{4}/B_{3}$ of the $F$-theory model $W_{4}/B_{3}$. Otherwise
at the outset the $\mathbb{Z}_{2}$-action will simultaneously break
$E_{8}$-symmetry on the $F$-theory dual while maintaining $E_{8}$-symmetry
on the Heterotic model.

\section{Breaking $E_{8}$-symmetry to $SU\left(5\right)$ in $F$-theory\label{sec:Device-for-tracking4}}

Tracking the symmetry-breaking in $F$-theory and the Heterotic dual
begins by breaking symmetry of $E_{8}$ to that of the first factor of
a maximal subgroup 
\begin{equation}
\frac{SU\left(5\right)_{gauge}\times SU\left(5\right)_{Higgs}}{\mathbb{Z}_{5}}\hookrightarrow E_{8}.\label{eq:gpincl}
\end{equation}
The inclusion (\ref{eq:gpincl}) of rank-$8$ real compact semi-simple
Lie groups yields an identification of maximal abelian subalgebras
\begin{equation}
\mathfrak{h}_{SU\left(5\right)_{gauge}}\times\mathfrak{h}_{SU\left(5\right)_{Higgs}}\rightarrow\mathfrak{h}_{E_{8}},\label{eq:incl-1}
\end{equation}
an inclusion of Weyl groups
\begin{equation}
W\left(SU\left(5\right)_{gauge}\right)\times W\left(SU\left(5\right)_{Higgs}\right)\hookrightarrow W\left(E_{8}\right),\label{eq:incl2}
\end{equation}
and a morphism
\begin{equation}
\mathfrak{h}_{E_{8}}^{\ast}\rightarrow\mathfrak{h}_{SU\left(5\right)_{gauge}}^{\ast}\times\mathfrak{h}_{SU\left(5\right)_{Higgs}}^{\ast}\label{eq:map3}
\end{equation}
of roots and of the respective rings of Casimir polynomials. 

The associated symmetry-breaking is effected in terms of the `Tate
form' 
\begin{equation}
wy^{2}=x^{3}+a_{5}xyw+a_{4}zx^{2}w+a_{3}z^{2}yw^{2}+a_{2}z^{3}xw^{2}+a_{0}z^{5}w^{3}\label{eq:Tate}
\end{equation}
of the defining equation for the $F$-theory model $W_{4}/B_{3}$.
(\ref{eq:Tate}) defines $W_{4}$ as a hypersurface in a $\mathbb{P}^{2}$-bundle
\begin{equation}
P:=\mathbb{P}\left(\mathcal{O}_{B_{3}}\oplus\mathcal{O}_{B_{3}}\left(2N\right)\oplus\mathcal{O}_{B_{3}}\left(3N\right)\right)\label{eq:P-1}
\end{equation}
with homogeneous fiber coordinates $\left[w,x,y\right]$ over the
base $B_{3}$. $B_{3}$ is a Fano manifold that we will assume to
have very ample anti-canonical linear system whose generic divisor
we denote by $N$ and the the Calabi-Yau hypersurface $W_{4}\subseteq P$
is completely determined by the choice of
\[
z,a_{0},a_{2},a_{3},a_{4},a_{5},\frac{y}{x}=:t\in H^{0}\left(K_{B_{3}}^{-1}\right).
\]

By (\ref{eq:GUT}) we will require that
\[
z\circ\beta_{3}=-z
\]
so that by (\ref{eq:inv2})
\[
a_{j}\circ\beta_{3}=-a_{j}
\]
 for all $j$ and
\[
t\circ\beta_{3}=-t.
\]
Thus
\begin{equation}
z,a_{0},a_{2},a_{3},a_{4},a_{5},\frac{y}{x}=:t\in H^{0}\left(K_{B_{3}}^{-1}\right)^{\left[-1\right]},\label{eq:divcl-1}
\end{equation}
the $\left(-1\right)$-eigenspace with respect to the involution $\beta_{3}$
on $B_{3}$.\footnote{For purposes of avoiding vector-like exotics in the $F$-theory quotient,
we will always assume that 
\[
a_{0}=-\sum_{j=2}^{5}a_{j}.
\]
} We also initially assume that the $a_{2},a_{3},a_{4},a_{5}$ are chosen generically
in $H^{0}\left(K_{B_{3}}^{-1}\right)^{\left[-1\right]}$ . In particular
the map
\[
\psi_{3}=\left(a_{2},a_{3},a_{4},a_{5}\right):B_{3}\rightarrow\mathbb{P}^{3}
\]
is a finite morphism with the defining equation for $S_{\mathrm{GUT}}$
given by a (smooth) generic hyperplane section
\[
z=\sum_{j=2}^{5}\kappa_{j}a_{j}.
\]

\subsection{Tracking roots via rational double point surface singularities}
Our device for tracking the behavior of roots begins by rewriting
the equation of the $\mathrm{GUT}$-surface $S_{\mathrm{GUT}}$ as
\begin{equation}
z=a_{0}\text{·}\sum_{j=2}^{5}\kappa_{j}c_{j}\label{eq:zee-1}
\end{equation}
where $c_{j}=a_{j}/a_{0}$. Letting
\begin{equation}
\begin{array}{c}
B'_{3}:=B_{3}-\left\{ a_{0}=0\right\} \\
W_{4}':=W_{4}\times_{B_{3}}B'_{3}
\end{array}\label{eq:primedef}
\end{equation}
we divide (\ref{eq:Tate}) by $a_{0}^{6}$ and rescale by
\[
\begin{array}{c}
\frac{x}{a_{0}^{2}}\mapsto x\\
\frac{y}{a_{0}^{3}}\mapsto y\\
\frac{z}{a_{0}}\mapsto z
\end{array}
\]
to obtain
\begin{equation}
wy^{2}=x^{3}+c{}_{5}xyw+c_{4}zx^{2}w+c_{3}z^{2}yw^{2}+c_{2}z^{3}xw^{2}+z^{5}w^{3}\label{eq:bridge}
\end{equation}
with all entries invariant under the involution $\beta_{3}$ restricted
to $B_{3}'$. In particular $y$ now goes to $y$ under the $\mathbb{Z}_{2}$-action,
reflecting the fact that the Weyl chamber is no longer reversed when tracking the roots and $SU\left(5\right)$-symmetry
is preserved! It is only when wrapping a Wilson line on the non-contractible
loop on the $\mathbb{Z}_{2}$-quotient that symmetry is broken to
that of the Standard Model {[}MSSM{]}.

However to make this last equation compatible with the crepant resolution
of $W_{4}'/B_{3}'$, as in \cite{Donagi} where one has to interpret
the $c_{j}$ as Casimir polynomials giving the mapping 
\[
\left(c_{2},c_{3},c_{4},c_{5}\right):\mathfrak{h}_{SU\left(5\right)_{Higgs}}\rightarrow\frac{\mathfrak{h}_{SU\left(5\right)_{Higgs}}}{W\left(SU\left(5\right)\right)}=\mathbb{C}^{4},
\]
in order to equivariantly resolve the family (\ref{eq:bridge}), one
has to interpret the $c_{j}$ as Casimir polynomials giving the mapping 

\[
\left(c_{2},c_{3},c_{4},c_{5}\right):\mathfrak{h}_{SU\left(5\right)_{gauge}}\rightarrow\frac{\mathfrak{h}_{SU\left(5\right)_{gauge}}}{W\left(SU\left(5\right)\right)}=\mathbb{C}^{4}.
\]
Then in order to preserve $SU\left(5\right)$-symmetry on the quotient
of the $\mathbb{Z}_{2}$-action $\beta_{3}$, we must

1) replace each function $c_{j}$ on $\mathfrak{h}_{SU\left(5\right)_{gauge}}$
with the composed function
\[
c_{j}\circ\left(-I_{4}\right)
\]
where $-I_{4}$ takes each root to minus itself,

and 

2) send $y$ to $-y$ reflecting the action of $-I_{8}$ on $\mathfrak{h}_{E_{8}}$.

\noindent This will be explained in more detail in what follows.

By setting $w=1$, we will make
\begin{equation}
y^{2}=x^{3}+c{}_{5}xy+c_{4}zx^{2}+c_{3}z^{2}y+c_{2}z^{3}x+z^{5}\label{eq:link-1-1}
\end{equation}
a weighted homogeneous deformation of weight $30$ of the $E_{8}$
rational double point singularity
\[
y^{2}=x^{3}+z^{5}
\]
and simultaneously make (\ref{eq:link-1-1}) invariant with respect
to the involution induced by
\begin{equation}
\begin{array}{ccc}
\mathfrak{h}_{SU\left(5\right)_{gauge}}^{\mathbb{C}}\times\mathbb{C}^{3} &  & \mathfrak{h}_{SU\left(5\right)_{gauge}}^{\mathbb{C}}\times\mathbb{C}^{3}\\
\left(h,\left(x,y,z\right)\right) & \mapsto & \left(-h,\left(x,-y,z\right)\right)
\end{array}\label{eq:upmap}
\end{equation}
(that takes $c_{j}$ to $\left(-1\right)^{j}c_{j}$). 

\subsection{Deformation of the $E_{8}$ rational double point surface singularity}
To understand the implications of this last assertion, we begin by
choosing a nilpotent subregular element $X\in\mathfrak{e}_{8}$ whose
commutator contains $\mathfrak{h}_{E_{8}}^{\mathbb{C}}=\mathfrak{h}_{E_{8}^{\mathbb{C}}}$
and write elements of the the Lie subalgebra $\ker\left(ad\left(X\right)\right)$
as
\[
\left(\left(u,v\right),h,h'\right)\in\left(\mathbb{C}^{2}\times\mathfrak{h}_{SU\left(5\right)_{gauge}}\times\mathfrak{h}_{SU\left(5\right)_{Higgs}}\right)
\]
and restrict to the subspace
\[
\left(\left(u,v\right),h\right)\in\left(\mathbb{C}^{2}\times\mathfrak{h}_{SU\left(5\right)_{gauge}}\times\left\{ 0\right\} \right).
\]
To fit the product decomposition we must choose
\[
X=X_{gauge}^{sr}+X_{Higgs}^{r}
\]
where $X_{gauge}^{sr}\in\mathfrak{sl}\left(5;\mathbb{C}\right)_{gauge}$
is subregular and $X_{Higgs}^{r}\in\mathfrak{sl}\left(5;\mathbb{C}\right)_{Higgs}$
is regular and remark that their sum must be chosen to act faithfully
on the four non-principal summands of the adjoint action of $\mathfrak{sl}\left(5;\mathbb{C}\right)_{gauge}\times\mathfrak{sl}\left(5;\mathbb{C}\right)_{Higgs}$
on $\mathfrak{e}_{8}$.

The Jacobson-Morozov theorem \cite{Bourbaki} states that any nilpotent
element of $\mathfrak{sl}\left(5;\mathbb{C}\right)$, in particular
our subregular $X_{gauge}^{sr}$ in the commutant of our fixed $\mathfrak{h}_{SU\left(5\right)_{gauge}}^{\mathbb{C}}\subseteq\mathfrak{sl}\left(5;\mathbb{C}\right)$,
completes to an $\mathfrak{sl}\left(2;\mathbb{C}\right)$-triple $\bigl(X^{sr},Y^{sr},H^{sr}=\linebreak\left[X^{sr},Y^{sr}\right]\bigr)$.
Via (\ref{eq:incl-1}) that triple imbeds as a subalgebra of $\mathfrak{e}_{8}^{\mathbb{C}}$. 

We consider the following table of homogeneous forms\smallskip{}

\begin{center}
\begin{tabular}{|c|c|}
\hline 
Entry  & Weight=degree\tabularnewline
\hline 
\hline 
$x=f_{10}\left(u,v\right)$  & $10$\tabularnewline
\hline 
$y=f_{15}\left(u,v\right)$  & $15$\tabularnewline
\hline 
$z=f_{6}\left(u,v\right)$  & $6$\tabularnewline
\hline 
$c_{j}\left(h\right)$  & $j$\tabularnewline
\hline 
\end{tabular}\label{eq:table-1-1} \smallskip{}
\par\end{center}

\noindent on $\mathbb{C}^{2}\times\mathfrak{h}_{SU\left(5\right)_{gauge}}^{\mathbb{C}}$
with the property that the $f_{k}\left(u,v\right)$ are the generators
of the invariant polynomials of the finite subgroup of $SU\left(2\right)$
lying in $E_{8}$ via (\ref{eq:gpincl})  and the $c_{j}$ are the Casimirs that define
the mapping
\begin{equation}
\begin{array}{ccc}
\mathbb{C}^{2}\times\mathfrak{h}_{SU\left(5\right)_{gauge}}^{\mathbb{C}} & \rightarrow & \mathbb{C}^{3}\times\frac{\mathfrak{h}_{SU\left(5\right)_{gauge}}^{\mathbb{C}}}{W\left(SU\left(5\right)\right)}\\
\left(\left(u,v\right),h\right) & \mapsto & \left(\left(x,y,z\right),\left(c_{2},c_{3},c_{4},c_{5}\right)\right).
\end{array}\label{eq:singmap}
\end{equation}
As we will explain in more detail in the next Section, this gives
(\ref{eq:link-1-1}) the structure of a deformation 
\[
\mathcal{V}_{Tate}/\frac{\mathfrak{h}_{SU\left(5\right)_{gauge}}^{\mathbb{C}}}{W\left(SU\left(5\right)\right)}.
\]
of weighted homogeneous polynomials of weight $30$ of the $E_{8}$
rational double point surface singularity has image given by the equation
(\ref{eq:link-1-1}). Then the involution
\[
\left(\left(u,v\right),h\right)\mapsto\left(\left(-u,-v\right),-h\right)
\]
is equivariant with the involution induced by (\ref{eq:upmap}). 

Under the inclusion
\begin{equation}
\mathfrak{sl}\left(5;\mathbb{C}\right)_{gauge}\times\mathfrak{sl}\left(5;\mathbb{C}\right)_{Higgs}\hookrightarrow\mathfrak{e}_{8}^{\mathbb{C}}\label{eq:goodincl}
\end{equation}
that identifies maximal tori and hence Cartan subalgebras, we therefore
have the induced map
\begin{equation}
\begin{array}{c}
\frac{\mathfrak{h}_{SU\left(5\right)_{gauge}}^{\mathbb{C}}}{W\left(SU\left(5\right)_{gauge}\right)}\times\frac{\mathfrak{h}_{SU\left(5\right)_{Higgs}}^{\mathbb{C}}}{W\left(SU\left(5\right)_{Higgs}\right)}\rightarrow\frac{\mathfrak{h}_{E_{8}}^{\mathbb{C}}}{W\left(E_{8}\right)}\\
\left(\left(c_{2},c_{3},c_{4},c_{5}\right)_{gauge},\left(c_{2},c_{3},c_{4},c_{5}\right)_{Higgs}\right)\mapsto\left(a_{30},b_{24},b_{18},b_{12},c_{20},c_{14},c_{8},c_{2}\right)
\end{array}\label{eq:gauge-Higgs}
\end{equation}
where the coordinates of the $8$-dimensional vector space $\frac{\mathfrak{h}_{E_{8}}^{\mathbb{C}}}{W\left(E_{8}\right)}$
are indexed and weighted by the standard basis of Casimir polynomial algebra
of $E_{8}$.

Now the semi-universal deformation space of the rational double point
surface singularity
\begin{equation}
\left\{ y^{2}=x^{3}+z^{5}\right\} \subseteq\mathbb{C}\label{eq:E8-1}
\end{equation}
 is given by
\begin{align}
\label{eq:creptilda-1}
y^{2}&=x^{3}+z^{5}+a_{30}+\left(b_{24}z+b_{18}z^{2}+b_{12}z^{3}\right)\\
\notag &\quad +\left(c_{20}x+c_{14}xz+c_{8}xz^{2}+c_{2}xz^{3}\right)
\end{align}
where, in the first instance, the eight parameters $a_{j},b_{j},c_{j}$
are considered as free parameters of an eight-dimensional complex
vector space \cite{Brieskorn,Slodowy}.\linebreak  The semi-universal family
(\ref{eq:creptilda-1}) forms a hypersurface in $\mathbb{C}^{3}\times\frac{\mathfrak{h}_{E_{8}}^{\mathbb{C}}}{W\left(E_{8}\right)}$
\linebreak parametrized by the map 
\begin{equation}
\begin{array}{c}
\mathbb{C}^{2}\times\mathfrak{h}_{E_{8}}^{\mathbb{C}}\rightarrow\mathcal{V}_{8}\subseteq\mathbb{C}^{3}\times\frac{\mathfrak{h}_{E_{8}}^{\mathbb{C}}}{W\left(E_{8}\right)}\\
\left(\left(u,v\right),h\right)\mapsto\left(\left(x,y,z\right),\left(c_{2},c_{8},c_{14},c_{20},b_{12},b_{18},b_{24},a_{30}\right)\right).
\end{array}\label{eq:param}
\end{equation}
where one considers $\left(c_{2},c_{8},c_{14},c_{20},b_{12},b_{18},b_{24},a_{30}\right)$
as the standard generators of the Casimir polynomial algebra of $E_{8}$.
Again assigning\smallskip{}
 
\begin{center}
\begin{tabular}{|c|c|}
\hline 
Entry  & Weight\tabularnewline
\hline 
\hline 
$x=f_{10}$  & $10$\tabularnewline
\hline 
$y=f_{15}$  & $15$\tabularnewline
\hline 
$z=f_{6}$  & $6$\tabularnewline
\hline 
$a_{j},b_{j},c_{j}$  & $j$\tabularnewline
\hline 
\end{tabular}\label{eq:table-1-2} 
\par\end{center}

\smallskip{}
\noindent 
(\ref{eq:creptilda-1}) becomes a weighted homogeneous polynomial
of weight $30$ on 
\begin{equation}
\mathbb{C}^{2}\times\mathfrak{h}_{E_{8}^{\mathbb{C}}}.\label{eq:weightpar-1}
\end{equation}
We denote this semi-universal deformation as 
\begin{equation}
\mathcal{V}_{E_{8}}/\frac{\mathfrak{h}_{E_{8}}^{\mathbb{C}}}{W\left(E_{8}\right)}.\label{eq:V8}
\end{equation}
The fact that (\ref{eq:link-1-1}) is a weighted homogeneous deformation
of weight $30$ of the $E_{8}$ rational double point singularity
implies by semi-universality that it is induced by pullback 
\[
\begin{array}{ccc}
\mathcal{V}_{Tate}\times_{\frac{\mathfrak{h}_{SU\left(5\right)_{gauge}}^{\mathbb{C}}}{W\left(SU\left(5\right)_{gauge}\right)}}\mathfrak{h}_{SU\left(5\right)_{gauge}}^{\mathbb{C}} & \rightarrow & \mathcal{V}_{E_{8}}\\
\downarrow &  & \downarrow^{\pi}\\
\mathfrak{h}_{SU\left(5\right)_{gauge}}^{\mathbb{C}} & \rightarrow & \frac{\mathfrak{h}_{E_{8}}^{\mathbb{C}}}{W\left(E_{8}\right)}
\end{array}
\]
via (\ref{eq:gauge-Higgs}) from (\ref{eq:creptilda-1}). 

We next examine the semi-universal deformation (\ref{eq:V8}) in some
detail.

\section[Equivariant crepant resolution for the $E_{8}$ rational\\\mbox{} double point]{Equivariant crepant resolution for the $E_{8}$ rational double point\label{sec:Equivariant-crepant-resolution5}}

We again begin with the $E_{8}$-singularity (\ref{eq:E8-1}) whose
minimal resolution has the property that the intersection matrix of
its exceptional curves can be equated with the $E_{8}$-Dynkin diagram. (\ref{eq:E8-1})
is a quotient singularity via the forms 
\begin{equation}
\begin{array}{c}
x=f_{10}\left(u,v\right)\\
y=f_{15}\left(u,v\right)\\
z=f_{6}\left(u,v\right)
\end{array}\label{eq:generate-1-1-1}
\end{equation}
where $f_{j}\left(u,v\right)$ is a homogeneous form of degree $j$
in the complex $\left(u,v\right)$-plane. The exceptional curves themselves
are matched with simple roots corresponding to a choice of positive
Weyl chamber. The involution $\iota$ interchanges that choice of
positive Weyl chamber with its negative.

Letting $\varepsilon$ denote a primitive fifth root of unity, the
forms (\ref{eq:generate-1-1-1})are a minimal set of generators of
the sub-ring of the polynomial ring $\mathbb{C}\left[u,v\right]$
made up of the polynomials that are invariant under the action of
the the binary icosahedral group, that is, the finite subgroup $B\subseteq SU\left(2\right)$
of order $120$ generated by 
\[
\left(\begin{array}{cc}
\varepsilon^{3} & 0\\
0 & \varepsilon^{2}
\end{array}\right),\,\frac{1}{\sqrt{5}}\left(\begin{array}{cc}
-\varepsilon+\varepsilon^{4} & \varepsilon^{2}-\varepsilon^{3}\\
\varepsilon^{2}-\varepsilon^{3} & \varepsilon-\varepsilon^{4}
\end{array}\right).
\]
In fact 
\[
\begin{array}{ccc}
\mathbb{C}^{2} & \rightarrow & \mathbb{C}^{2}\\
\left(u,v\right) & \mapsto & \left(x,z\right)=\left(f_{10},f_{6}\right)
\end{array}
\]
has general fiber of cardinality $120$ on which $B$ acts transitively.
By degree, $f_{15}$ does not lie in the polynomial ring $\mathbb{C}\left[f_{10},f_{6}\right]$
however satisfies the second-degree integral equation 
\begin{equation}
wy^{2}=x^{3}+a_{0}z^{5}\label{eq:E8-1-1-1-1}
\end{equation}
and the degree of the mapping 
\begin{equation}
\begin{array}{ccc}
\mathbb{C}^{2} & \rightarrow & \mathbb{C}^{3}\\
\left(u,v\right) & \mapsto & \left(x,y,z\right)=\left(f_{10},f_{15},f_{6}\right)
\end{array}\label{eq:quotsg-1-1-1}
\end{equation}
is $240$. This is equivalent to the fact that 
\begin{equation}
f_{15}\left(-u,-v\right)=-f_{15}\left(u,v\right).\label{eq:flop-1-1}
\end{equation}
Thus the symmetry $\left(u,v\right)\mapsto\left(-u,-v\right)$ commutes
with the symmetry\linebreak $\left(x,y,z\right)\mapsto\left(x,-y,z\right)$.

The breaking of $E_{8}$-symmetry is tracked by an unfolding of (\ref{eq:E8-1})
in the semi-universal deformation space 
\begin{align}
\label{eq:creptilda-1-1}
y^{2}&=x^{3}+z^{5}+a_{30}+\left(b_{24}z+b_{18}z^{2}+b_{12}z^{3}\right)\\
\notag &\quad +\left(c_{20}x+c_{14}xz+c_{8}xz^{2}+c_{2}xz^{3}\right)
\end{align}
where, in the first instance, the eight parameters $a_{j},b_{j},c_{j}$
are considered as free parameters of an eight-dimensional complex
vector space that we will denote as 
\[
U_{8}:=\frac{\mathfrak{h}_{E_{8}}^{\mathbb{C}}}{W\left(E_{8}\right)}.
\]

Since the roots of the various subgroups of $E_{8}$ to which the
$E_{8}$-symmetry is broken are represented by the exceptional curves
of the crepant resolution of a rational double-point singularity over
a point of $U_{8}$ in (\ref{eq:param}), we will not be able to `follow
the roots' without understanding the Brieskorn-Grothendieck equivariant
crepant resolution of semi-universal deformation of the $E_{8}$-rational
double-point singularity as given in \cite{Brieskorn} and \cite{Slodowy}.

Unfortunately the equivariant Brieskorn-Grothendieck resolution cannot
be a resolution over $U_{8}$. Rather one considers $U_{8}$ as the
quotient 
\[
\left(c_{2},c_{8},c_{14},c_{20},b_{12},b_{18},b_{24},a_{30}\right):\mathfrak{h}_{E_{8}^{\mathbb{C}}}\rightarrow\frac{\mathfrak{h}_{E_{8}^{\mathbb{C}}}}{W\left(E_{8}^{\mathbb{C}}\right)}=U_{8}
\]
by considering the eight parameters $a_{j},b_{j},c_{j}$ in (\ref{eq:param})
as a standard basis of the $E_{8}^{\mathbb{C}}$ Casimir polynomials,
then assigning weights\smallskip{}
 
\begin{center}
\begin{tabular}{|c|c|}
\hline 
Entry  & Weight\tabularnewline
\hline 
\hline 
$x=f_{10}$  & $10$\tabularnewline
\hline 
$y=f_{15}$  & $15$\tabularnewline
\hline 
$z\equiv z=f_{6}$  & $6$\tabularnewline
\hline 
$a_{j},b_{j},c_{j}$  & $j$\tabularnewline
\hline 
\end{tabular}\label{eq:table-1} 
\par\end{center}

\smallskip{}
\noindent 
so that (\ref{eq:creptilda-1-1}) becomes a weighted homogeneous polynomial
of weight $30$ on 
\begin{equation}
\mathbb{C}^{2}\times\mathfrak{h}_{E_{8}^{\mathbb{C}}}.\label{eq:weightpar}
\end{equation}
From (\ref{eq:param}) we then have
\[
\begin{array}{ccc}
\mathbb{C}^{2}\times\mathfrak{h}_{E_{8}^{\mathbb{C}}} & \rightarrow & \mathcal{V}_{8}\times_{U_{8}}\mathfrak{h}_{E_{8}^{\mathbb{C}}}\\
\downarrow &  & \downarrow\\
\mathbb{C}^{2}\times U_{8} & \rightarrow & \mathcal{V}_{8}\subseteq\mathbb{C}^{3}\times U_{8}
\end{array}
\]

Somewhat miraculously, the equivariant crepant resolution $\tilde{\mathcal{V}}_{8}$
over $\mathfrak{h}_{E_{8}^{\mathbb{C}}}$ is canonically given by
subvarieties of the incidence variety 
\[
\mathcal{I}_{E_{8}}:=\left\{ \left(x,B\right):x\in B\right\} \subseteq E_{8}^{\mathbb{C}}\times\left\{ B\leq E_{8}^{\mathbb{C}}:B\,a\,Borel\,subgroup\right\} .
\]
To understand this, we begin with the regular elements of $E_{8}^{\mathbb{\mathbb{C}}}$,
that is those lying in only a finite number of Borel subgroups. Each
component of the commutant of a regular element is a maximal torus.
Choosing a maximal torus $\mathfrak{T}_{8}^{\mathbb{C}}$ for $E_{8}^{\mathbb{C}}$,
the product 
\[
\mathfrak{h}_{E_{8}^{\mathbb{\mathbb{C}}}}\times\mathbb{A}_{\left(u,v\right)}
\]
imbeds in the Lie algebra $\mathfrak{e}_{8}^{\mathbb{C}}$ as the
Lie algebra of the commutant subgroup $C\left(x_{sr}\right)$ of a
so-called subregular element $x_{sr}\in\mathfrak{T}_{8}^{\mathbb{C}}$
, that is, one whose commutant in $E_{8}^{\mathbb{C}}$ contains $\mathfrak{T}_{8}^{\mathbb{C}}$
as a codimension-two subgroup. The Lie $\mathbb{C}\text{·}H+\mathbb{A}_{\left(u,v\right)}$
should be considered as the Lie algebra of
\[
\frac{C\left(x_{sr}\right)}{\mathfrak{T}_{8}^{\mathbb{C}}}
\]
and, as such, one of the $\mathfrak{su}\left(2\right)\otimes\mathbb{C}$
Lie algebras in (\ref{eq:equi}), the complexification of the real
subalgebra generated by the two real matrices in (\ref{eq:real}).
Since $\iota$ acts as multiplication by $\left(-1\right)$ on roots,
it also takes a Borel subalgebra containing $\mathfrak{h}_{E_{8}^{\mathbb{\mathbb{C}}}}$
to its opposite Borel subalgebra. So if we let $u$ be the coordinate
for $\mathfrak{l}_{\varrho}$ and let $v$ be the coordinate for $\mathfrak{l}_{-\varrho}$
then via (\ref{eq:diagram}) $\iota$ induces the involution
\[
\left(h;u,v\right)\mapsto\left(-h;-u,-v\right)
\]
on $\mathfrak{h}_{E_{8}^{\mathbb{\mathbb{C}}}}\times\mathbb{A}_{\left(u,v\right)}$.

For the maximal torus $\mathfrak{T}_{8}^{\mathbb{C}}=\exp\left(\mathfrak{h}_{8}^{\mathbb{C}}\right)$
of $E_{8}^{\mathbb{C}}$ we next identify neighborhoods of the identity
under the exponential map 
\begin{equation}
\begin{array}{ccccccc}
\mathfrak{h}_{E_{8}^{\mathbb{\mathbb{C}}}}\times\mathbb{A}_{\left(u,v\right)} & \hookrightarrow & \mathfrak{e}_{8}^{\mathbb{C}} &  & C\left(x_{sr}\right) & \hookrightarrow & E_{8}^{\mathbb{C}}\\
\downarrow &  & \downarrow & \overset{\exp}{\Longrightarrow} & \downarrow &  & \downarrow\\
\mathfrak{h}_{E_{8}^{\mathbb{\mathbb{C}}}} & \rightarrow & \mathfrak{h}_{8}^{\mathbb{C}}/W\left(E_{8}\right) &  & \mathfrak{T}_{8}^{\mathbb{C}} & \rightarrow & \mathfrak{T}_{8}^{\mathbb{C}}/W\left(E_{8}\right)
\end{array}\label{eq:diagram}
\end{equation}
where the complex analytic map 
\[
E_{8}^{\mathbb{C}}\rightarrow\mathfrak{T}_{8}^{\mathbb{C}}/W\left(E_{8}\right)
\]
assigns to the conjugacy class of $x\in E_{8}^{\mathbb{C}}$ the well-defined
element $x_{s}\in\linebreak\mathfrak{T}_{8}^{\mathbb{C}}/W\left(E_{8}\right)$
of its Jordan decomposition $x=x_{s}x_{u}$ into commuting semi-simple
and unipotent factors.

In a small neighborhood of the identity the set of subregular elements
in $\mathfrak{T}_{8}^{\mathbb{C}}$ corresponds exactly under (\ref{eq:diagram})
to the set of the singular points of the versal deformation (\ref{eq:creptilda-1-1})
and the Brieskorn-Grothendieck equivariant crepant resolution 
\begin{equation}
\begin{array}{ccc}
\tilde{\mathcal{V}}_{8} & \rightarrow & \mathcal{V}_{8}\\
\downarrow &  & \downarrow\\
\mathfrak{h}_{E_{8}^{\mathbb{\mathbb{C}}}} & \rightarrow & U_{8}
\end{array}\label{eq:BG-1-1}
\end{equation}
has exception fibers over sub-regular element $x_{sr}$ given by subspaces
\[
\left\{ x_{sr}\right\} \times_{E_{8}}\mathcal{I}_{E_{8}}
\]
consisting of those Borels $B$ that contain $x_{sr}$. (See \cite{Brieskorn}
and \cite{Slodowy}.)
\begin{lem}
The action of the complex conjugate involution $\iota$ reverses a
choice of positive Weyl chamber of $E_{8}$ with its negative and
therefore reverse the choice of Weyl chamber with respect to which
the Brieskorn-Grothendieck equivariant crepant resolution is defined.
This reversal induces the involution (\ref{eq:inv2}) on on the semi-universal
deformation (\ref{eq:param}) of the $E_{8}$ rational double point
surface singularity. 
\end{lem}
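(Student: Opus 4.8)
The plan is to verify the three assertions of the statement in turn: that $\iota$ reverses a positive Weyl chamber, that this reversal passes to the Brieskorn--Grothendieck resolution, and that the net effect on the deformation (\ref{eq:param}) is exactly (\ref{eq:inv2}). The first assertion is already essentially in hand. By Lemma \ref{lem:For-each-root}, together with the discussion of Section \ref{sec:Retaining--symmetry3}, complex conjugation $\iota$ acts on $\mathfrak{h}_{E_{8}^{\mathbb{C}}}$ as $-I_{8}$, so it sends every root $\rho$ to $-\rho$. Consequently it carries the set of positive roots determined by a chosen chamber onto the set of negative roots, i.e. it interchanges the chosen positive Weyl chamber with its negative. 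I would emphasize here that this is the action of complex conjugation, which is pointwise trivial on the compact form $E_{8}$, and should not be conflated with a Weyl-group element, even though for $E_{8}$ the long element happens to act as $-I_{8}$ as well.

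For the second assertion I would recall that the exceptional curves of the equivariant resolution (\ref{eq:BG-1-1}) over a subregular element are indexed by the Borel subalgebras containing that element, and that their identification with a system of simple roots is precisely the datum of a choice of positive chamber. Since $\iota$ acts as $-I_{8}$ on the Cartan, it carries each Borel subalgebra containing $\mathfrak{h}_{E_{8}^{\mathbb{C}}}$ to its opposite Borel, as already noted in the discussion preceding (\ref{eq:diagram}). Hence the chamber reversal of the first step is literally the reversal of the combinatorial datum defining the Brieskorn--Grothendieck resolution.

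The third assertion is the concrete computation, and is where the group theory becomes the geometry. Via (\ref{eq:diagram}) the involution $\iota$ acts on $\mathfrak{h}_{E_{8}^{\mathbb{C}}}\times\mathbb{A}_{(u,v)}$ by $(h;u,v)\mapsto(-h;-u,-v)$. I would then read off the effect on the coordinates of (\ref{eq:param}) using the explicit binary-icosahedral invariants $x=f_{10}$, $y=f_{15}$, $z=f_{6}$. Because $f_{10}$ and $f_{6}$ are homogeneous of even degree, the substitution $(u,v)\mapsto(-u,-v)$ fixes $x$ and $z$; because $f_{15}$ is homogeneous of odd degree --- which is exactly the content of (\ref{eq:flop-1-1}) --- the same substitution negates $y$. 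Thus the induced map on the deformation is $(x,y,z)\mapsto(x,-y,z)$, which is (\ref{eq:inv2}). On the parameter side the action $h\mapsto -h$ sends the Casimir $c_{j}$ to $(-1)^{j}c_{j}$, matching the weighted-homogeneous structure and confirming compatibility with the involution (\ref{eq:upmap}).

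The main obstacle is not any single calculation but the bookkeeping that keeps the group-theoretic chamber reversal and the geometric sign flip aligned through the two quotients involved: the quotient of $\mathbb{C}^{2}$ by the binary icosahedral group, whose invariants are $f_{10},f_{15},f_{6}$, and the Weyl quotient $\mathfrak{h}_{E_{8}^{\mathbb{C}}}\to U_{8}$. I would therefore take care to check that the involution $(h;u,v)\mapsto(-h;-u,-v)$ descends consistently to both the singular deformation $\mathcal{V}_{8}$ and its resolution $\tilde{\mathcal{V}}_{8}$, so that the chamber reversal upstairs and the Weierstrass involution $(x,y)\mapsto(x,-y)$ downstairs are genuinely two descriptions of one and the same $\mathbb{Z}_{2}$-action, rather than merely two actions that happen to agree numerically.
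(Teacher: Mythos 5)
Your proposal is correct and takes essentially the same approach as the paper's own proof: $\iota=-I_{8}$ on $\mathfrak{h}_{E_{8}^{\mathbb{C}}}$ reverses the chamber (carrying each Borel containing the Cartan to its opposite, hence reversing the datum defining the Brieskorn--Grothendieck resolution), and the induced involution $\left(h;u,v\right)\mapsto\left(-h;-u,-v\right)$ fixes $x=f_{10}$, $z=f_{6}$ and the even-degree $E_{8}$-Casimirs while negating the unique odd-weight function $y=f_{15}$, yielding $\left(x,y,z\right)\mapsto\left(x,-y,z\right)$. The only cosmetic difference is that the paper runs the parity argument on the weighted-homogeneous functions of $\left(h;u,v\right)$ over the whole family, whereas you argue on the binary-icosahedral invariants of $\left(u,v\right)$ at the central fiber and then appeal to the weighted-homogeneous structure, which comes to the same thing and is covered by the descent check in your final paragraph.
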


\begin{proof}
The Brieskorn-Grothendieck equivariant crepant resolution is built
entirely inside the product of

1) the commutator of a sub-regular element of the complex algebraic
group $E_{8}^{\mathbb{C}}$

and

2) the set of Borel subgroups of $E_{8}^{\mathbb{C}}$.

Since all of the $E_{8}$-Casimirs are of even degree, $\left(u,v\right)\mapsto\left(-u,-v\right)$
commutes with the symmetry $\left(x,y,z\right)\mapsto\left(x,-y,z\right)$
in (\ref{eq:param}). But this last symmetry only commutes with the
Brieskorn-Grothendieck equivariant crepant resolution if it is induced
by a symmetry of (\ref{eq:weightpar}). Now the functions in Table
\ref{eq:table-1} are functions in the variables $\left(h;u,v\right)$
where the complex eight-tuple $h$ is the parameter for a neighborhood
of the origin in the Lie algebra of $E_{8}^{\mathbb{C}}$. The only
function of odd weight is
\[
y=f_{15}\left(h;u,v\right).
\]
Therefore the involution 
\[
\left(h;u,v\right)\mapsto\left(-h;-u,-v\right)
\]
induced by $\iota$ commutes with the projection to $U_{8}$ and sends
$\left(x,y,z\right)\mapsto\left(x,-y,z\right)$. 
\end{proof}
So the Brieskorn-Grothendieck equivariant crepant resolution is actually
a pair of crepant resolutions $\dot{\mathcal{V}}_{8}$ and $\ddot{\mathcal{V}}_{8}$
of the pullback 
\[
\mathcal{V}_{8}\times_{U_{8}}\mathfrak{h}_{E_{8}^{\mathbb{\mathbb{C}}}}
\]
of the family (\ref{eq:creptilda-1-1}) to a family over the Cartan
subalgebra $\mathfrak{h}_{E_{8}^{\mathbb{\mathbb{C}}}}$. The two
correspond to whether the resolution was grounded in a given choice
of positive Weyl chamber or its negative. The two resolutions  are
related by a real analytic isomorphism over $\mathcal{V}_{8}\times_{U_{8}}\mathfrak{h}_{E_{8}^{\mathbb{\mathbb{C}}}}$
induced by $\iota\in Gal\left(\mathbb{C}/\mathbb{R}\right)$. That
is, we have the commutative diagram 
\[
\begin{array}{ccc}
\dot{\mathcal{V}}_{8} & \overset{\iota}{\longleftrightarrow} & \ddot{\mathcal{V}}_{8}\\
\downarrow &  & \downarrow\\
\mathcal{V}_{8} & \overset{\left(x,y,z\right)\mapsto\left(x,-y,z\right)}{\longleftrightarrow} & \mathcal{V}_{8}.
\end{array}
\]
This diagram is then incorporated into a commutative diagram 
\begin{equation}
\begin{array}{ccc}
\dot{\mathcal{V}}_{8}\times_{\left(\mathcal{V}_{8}\times_{U_{8}}\mathfrak{h}_{E_{8}^{\mathbb{\mathbb{C}}}}\right)}\ddot{\mathcal{V}}_{8} & \rightarrow & \mathcal{V}_{8}\\
\downarrow &  & \downarrow\\
\mathfrak{h}_{E_{8}^{\mathbb{\mathbb{C}}}} & \rightarrow & U_{8}=\frac{\mathfrak{h}_{E_{8}^{\mathbb{\mathbb{C}}}}}{W\left(E_{8}^{\mathbb{C}}\right)}
\end{array}\label{eq:comm1-1}
\end{equation}
for which the left-hand vertical map is smooth on each factor of the
fibered product and the top horizontal map factors through crepant
resolutions $\dot{\mathcal{V}}_{8}\rightarrow\left(\mathcal{V}_{8}\times_{U_{8}}\mathfrak{h}_{E_{8}^{\mathbb{\mathbb{C}}}}\right)$
and $\ddot{\mathcal{V}}_{8}\rightarrow\left(\mathcal{V}_{8}\times_{U_{8}}\mathfrak{h}_{E_{8}^{\mathbb{\mathbb{C}}}}\right)$
respectively.

These two resolutions have the the following four properties:

1) Each is determined by a choice of positive Weyl chamber in $\mathfrak{h}_{8}^{\mathbb{C}}$.

2) The mappings 
\begin{equation}
\dot{\mathcal{V}}_{8},\,\mathcal{\ddot{V}}_{8}/\mathfrak{h}_{8}^{\mathbb{C}}\rightarrow\left(\mathcal{V}_{8}\times_{U_{8}}\mathfrak{h}_{8}^{\mathbb{C}}\right)/\mathfrak{h}_{8}^{\mathbb{C}}\label{eq:eqres-1}
\end{equation}
are isomorphisms except over the singular locus of $\mathcal{V}_{8}\times_{U_{8}}\mathfrak{h}_{8}^{\mathbb{C}}$.

3) Over a singular point $\left(z=0,x=0,y=0;h\right)$ the fiber is
the Dynkin curve for the minimal resolution of the rational double-point
singularity corresponding to $h\in\mathfrak{h}_{8}^{\mathbb{C}}$.

4) Since all weights in Table (\ref{eq:table-1}) except that of $y$
are even and $\iota$ acts as minus the identity on the vector space
$\mathfrak{h}_{E_{8}^{\mathbb{\mathbb{C}}}}\times\mathbb{A}_{\left(u,v\right)}$,
we conclude that $\iota$ acts fiberwise on the semi-universal family
(\ref{eq:creptilda-1-1}) as 
\begin{equation}
\left(u,v,x,y,z\right)\mapsto\left(-u,-v,x,-y,z\right).\label{eq:fibinv-1}
\end{equation}

We therefore are able to conclude the following. 
\begin{thm}
\label{thm:-symmetry-breaking-in-Heterotic} Via the Brieskorn-Grothendieck
equivariant crepant resolution of the $E_{8}$ rational double-point
(\ref{eq:E8-1}), the action of the generator $\iota\in Gal\left(\mathbb{C}/\mathbb{R}\right)$
corresponds to the action 
\begin{equation}
\left(x,y\right)\mapsto\left(x,-y\right)\label{eq:Weier}
\end{equation}
on the Weierstrass form on the fibers of the semi-universal deformation
of the $E_{8}$ rational double point. 
\end{thm}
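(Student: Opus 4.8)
The plan is to obtain Theorem \ref{thm:-symmetry-breaking-in-Heterotic} as the culmination of the weight bookkeeping set up in (\ref{eq:param}) together with the root-theoretic input of Lemma \ref{lem:For-each-root}. In outline, I would first fix the description of the semi-universal family as the quotient $\mathbb{C}^2_{(u,v)}/B$ by the binary icosahedral group $B\subseteq SU(2)$, with generating invariants $x=f_{10}$, $y=f_{15}$, $z=f_6$ and deformation directions the $E_8$-Casimirs; then I would transport the involution $\iota$ through this presentation and read off its effect on the Weierstrass coordinates $(x,y)$. Because the entire Brieskorn-Grothendieck resolution is built inside the incidence variety $\mathcal{I}_{E_8}$ out of the Cartan and the Borels, it suffices to understand $\iota$ on $\mathfrak{h}_{E_8^{\mathbb{C}}}\times\mathbb{A}_{(u,v)}$ and push forward to the quotient.

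The key steps, in order, are as follows. First, by Lemma \ref{lem:For-each-root} the involution $\iota$ exchanges each pair of root spaces $\mathfrak{l}_\rho\leftrightarrow\mathfrak{l}_{-\rho}$ and acts as $-I$ on $\mathfrak{h}_{E_8}^{\mathbb{C}}$; by the preceding lemma this is exactly the reversal of the positive Weyl chamber underlying the equivariant resolution, and it carries each Borel containing $x_{sr}$ to its opposite. Second, I would transport this action to the subregular plane: taking $u$ as the coordinate on $\mathfrak{l}_\rho$ and $v$ on $\mathfrak{l}_{-\rho}$ inside the $\mathfrak{su}(2)\otimes\mathbb{C}$ Lie algebra of $C(x_{sr})/\mathfrak{T}_8^{\mathbb{C}}$, the induced involution on $\mathfrak{h}_{E_8^{\mathbb{C}}}\times\mathbb{A}_{(u,v)}$ is $(h;u,v)\mapsto(-h;-u,-v)$. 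Third, I would run the weight computation: since $f_k(-u,-v)=(-1)^k f_k(u,v)$, the even-weight generators $x=f_{10}$ and $z=f_6$ are fixed while the unique odd-weight generator $y=f_{15}$ is negated, this last being precisely (\ref{eq:flop-1-1}); and since every $E_8$-Casimir in the list $a_{30},b_{24},b_{18},b_{12},c_{20},c_{14},c_8,c_2$ of Table \ref{eq:table-1} has even weight, the involution is the identity on $U_8$ and descends fiberwise to (\ref{eq:fibinv-1}). Restricting (\ref{eq:fibinv-1}) to the Weierstrass coordinates then yields $(x,y)\mapsto(x,-y)$, which is the assertion.

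The step I expect to be the main obstacle is the second one, namely pinning down the holomorphic involution $(x,y,z)\mapsto(x,-y,z)$ as the genuine shadow of $\iota$ on the Weierstrass model, given that $\iota$ is only real-analytic and \emph{a priori} exchanges the two root-space coordinates antiholomorphically rather than simply negating them. The resolution is that what is tracked on the base $\mathcal{V}_8$ is the action on the $W(E_8)$-invariant Casimir data, on which the antiholomorphic ambiguity collapses: the effect on the generating invariants is governed purely by the parity of their weights, all Casimirs being of even weight and $f_{15}$ being the only generator of odd weight. It is exactly this parity dichotomy — already isolated in (\ref{eq:flop-1-1}) and in property (4) leading to (\ref{eq:fibinv-1}) — that converts the chamber-reversing $\iota$ into the single holomorphic Weierstrass involution (\ref{eq:Weier}).
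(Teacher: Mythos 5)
Your proposal is correct and takes essentially the same route as the paper: Lemma \ref{lem:For-each-root} and the chamber-reversal lemma give the action $\left(h;u,v\right)\mapsto\left(-h;-u,-v\right)$ on $\mathfrak{h}_{E_{8}^{\mathbb{\mathbb{C}}}}\times\mathbb{A}_{\left(u,v\right)}$, and then the weight-parity bookkeeping (all $E_{8}$-Casimirs of even weight, $y=f_{15}$ the unique odd-weight generator, as in (\ref{eq:flop-1-1})) yields the fiberwise involution (\ref{eq:fibinv-1}) and hence (\ref{eq:Weier}). The paper states the theorem as an immediate consequence of its Section \ref{sec:Equivariant-crepant-resolution5} lemma and property 4), which is precisely the argument you reconstruct, including the observation that the antiholomorphic nature of $\iota$ is absorbed because its effect on the invariant data is determined purely by weight parity.
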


\section[Tracking the equivariant resolution under\\\mbox{} symmetry-breaking]{Tracking the equivariant resolution under symmetry-breaking\label{sec:Tracking-the-equivariant6}}

\subsection{Immersing $W_{4}/B_{3}$ into the versal deformation of the $E_{8}$
rational double point surface singularity}

Next notice that the above mapping
\begin{equation}
\left(c_{2},c_{3},c_{4},c_{5}\right):B'_{3}\rightarrow\mathbb{C}^{4}=\frac{\mathfrak{h}_{SU\left(5\right)_{gauge}}^{\mathbb{C}}}{W\left(SU\left(5\right)\right)}\label{eq:c3-1-1}
\end{equation}
is such that, by (\ref{eq:link-1-1}) coupled with (\ref{eq:zee-1}),
it defines a commutative diagram 
\begin{equation}
\begin{array}{ccc}
W_{4}' & \rightarrow & \mathcal{V}_{Tate}\\
\downarrow &  & \downarrow\\
B'_{3} & \rightarrow & \mathbb{C}^{4}.
\end{array}\label{eq:keycom-1}
\end{equation}

We next return to the isomorphism
\[
\mathfrak{h}_{SU\left(5\right)_{gauge}}\times\mathfrak{h}_{SU\left(5\right)_{Higgs}}\rightarrow\mathfrak{h}_{E_{8}}
\]
given in (\ref{eq:incl-1}) inducing the epimorphism (\ref{eq:incl-1}).
We pull back the family $\mathcal{V}_{Tate}/\frac{\mathfrak{h}_{SU\left(5\right)_{gauge}}^{\mathbb{C}}}{W\left(SU\left(5\right)_{gauge}\right)}$
given by (\ref{eq:link-1-1}) under the map
\[
\mathfrak{h}_{SU\left(5\right)_{gauge}}^{\mathbb{C}}\times\frac{\mathfrak{h}_{SU\left(5\right)_{Higgs}}^{\mathbb{C}}}{W\left(SU\left(5\right)_{Higgs}\right)}\rightarrow\frac{\mathfrak{h}_{E_{8}^{\mathbb{\mathbb{C}}}}}{W\left(E_{8}^{\mathbb{C}}\right)}
\]
where it becomes a weighted homogeneous family of rational double
points of weight $30$. Therefore by the semi-universality of the
family (\ref{eq:creptilda-1}) we have the induced commutative diagram
\begin{equation}
\begin{array}{ccc}
\mathcal{V}_{Tate}\times_{\frac{\mathfrak{h}_{SU\left(5\right)_{gauge}}^{\mathbb{C}}}{W\left(SU\left(5\right)_{gauge}\right)}}\mathfrak{h}_{SU\left(5\right)_{gauge}}^{\mathbb{C}} & \rightarrow & \mathcal{V}_{E_{8}}\\
\downarrow &  & \downarrow^{\pi}\\
\mathfrak{h}_{SU\left(5\right)_{gauge}}^{\mathbb{C}} & \rightarrow & \frac{\mathfrak{h}_{E_{8}}^{\mathbb{C}}}{W\left(E_{8}\right)}
\end{array}\label{eq:keycom-2}
\end{equation}
and so by §8 of \cite{Slodowy} equivariant crepant resolutions
\begin{equation}
\begin{array}{ccc}
\left\{ \mathcal{\dot{V}}_{Tate},\,\ddot{\mathcal{V}}_{Tate}\right\} \times_{\frac{\mathfrak{h}_{SU\left(5\right)_{gauge}}^{\mathbb{C}}}{W\left(SU\left(5\right)_{gauge}\right)}}\mathfrak{h}_{SU\left(5\right)_{gauge}}^{\mathbb{C}} & \rightarrow & \mathcal{\dot{V}}_{E_{8}},\,\ddot{\mathcal{V}}_{E_{8}}\\
\downarrow &  & \downarrow^{\tilde{\pi}}\\
\mathfrak{h}_{SU\left(5\right)_{gauge}}^{\mathbb{C}} & \rightarrow & \mathfrak{h}_{E_{8}}^{\mathbb{C}}.
\end{array}\label{eq:res}
\end{equation}
Now referring to the composition 
\begin{equation}
\begin{array}{ccccc}
W_{4}' & \hookrightarrow & \mathcal{V}_{Tate} & \rightarrow & \mathcal{V}_{E_{8}}\\
\downarrow &  & \downarrow &  & \downarrow^{\pi}\\
B_{3}' & \hookrightarrow & \mathbb{C}^{4}=\frac{\mathfrak{h}_{SU\left(5\right)_{gauge}}^{\mathbb{C}}}{W\left(SU\left(5\right)_{gauge}\right)} & \rightarrow\mathbb{C}^{8}= & \frac{\mathfrak{h}_{E_{8}}^{\mathbb{C}}}{W\left(E_{8}\right)}
\end{array}\label{eq:comdia}
\end{equation}
of (\ref{eq:keycom-1}) and (\ref{eq:keycom-2}), (\ref{eq:res})
lets us track roots over the central column of (\ref{eq:comdia}).
It remains to track those roots as given by (\ref{eq:res}) to the exceptional curves of a crepant resolution $\tilde{W}_{4}/B_{3}$
of $W_{4}/B_{3}$, at least over a general point $p\in S_{\mathrm{GUT}}$. 

Before proceeding to accomplish this last task, notice that $\iota=-I_{8}$
acts as
\begin{align*}
&\left(\left(a_{30},b_{24},b_{18},b_{12},c_{20},c_{14},c_{8},c_{2}\right),\left(x,y,z\right)\right)\\ &\qquad \mapsto\left(\left(a_{30},b_{24},b_{18},b_{12},c_{20},c_{14},c_{8},c_{2}\right),\left(x,-y,z\right)\right)
\end{align*}
on the right-hand vertical map in (\ref{eq:comdia}), as
\[
\left(\left(c_{2},c_{3},c_{4},c_{5}\right),\left(x,y,z\right)\right)\mapsto\left(\left(c_{2},-c_{3},c_{4},-c_{5}\right),\left(x,-y,z\right)\right)
\]
on the central vertical map in (\ref{eq:comdia}), and as 
\[
\left(b_{3},\left(x,y,z\right)\right)\mapsto\left(\beta_{3}\left(b_{3}\right),\left(x,-y,-z\right)\right)
\]
on the left-hand vertical map in (\ref{eq:comdia}).\footnote{The lack of a sign change of $y$ between the central vertical map
of (\ref{eq:comdia}) and the left-hand vertical map under the action
of $-I_{8}$ reflects the fact that the action $y\mapsto-y$ of the
involution $\beta_{3}$ on $B_{3}'$ will incorporate the reversal
of positive roots with their negatives induced by the complex conjugate
involution $-I_{8}$. That `flop' interchanges the equivarant crepant
resolutions $\mathcal{\dot{V}}_{E_{8}}/\mathfrak{h}_{E_{8}}^{\mathbb{C}}$
and $\ddot{\mathcal{V}}_{E_{8}}/\mathfrak{h}_{E_{8}}^{\mathbb{C}}$.
It is only in this way that (\ref{eq:link-1-1}) becomes invariant
under the action of $\beta_{3}$. } 
\begin{thm}
(\ref{eq:keycom-1}) allows us to track a crepant resolution of $W_{4}'/B_{3}'$
back to the crepant resolution of the $E_{8}$ rational double point
singularity over a general point $p\in S_{\mathrm{GUT}}\cap B_{3}'$.
\end{thm}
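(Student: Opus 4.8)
The plan is to bolt the three commutative squares (\ref{eq:keycom-1}), (\ref{eq:keycom-2}) and (\ref{eq:res}) together into the single diagram (\ref{eq:comdia}) and then read off the exceptional fibre of a crepant resolution over $p$ by transporting it through the central column. First I would localize near a general $p\in S_{\mathrm{GUT}}\cap B_3'$: choosing local coordinates $(z,s_{1},s_{2})$ on $B_3'$ with $S_{\mathrm{GUT}}=\{z=0\}$ and $(s_{1},s_{2})$ coordinates on $S_{\mathrm{GUT}}$, and freezing $(s_{1},s_{2})=(s_{1}(p),s_{2}(p))$, equation (\ref{eq:link-1-1}) presents the germ of $W_4'$ transverse to the $SU(5)$ locus as the surface
\[
\Sigma_{p}:\quad y^{2}=x^{3}+c_{5}xy+c_{4}zx^{2}+c_{3}z^{2}y+c_{2}z^{3}x+z^{5},
\]
with the $c_{j}$ evaluated at $p$. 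By construction $\Sigma_{p}$ is exactly the fibre of $\mathcal{V}_{Tate}\to\mathbb{C}^{4}$ over the point $(c_{2},c_{3},c_{4},c_{5})(p)$, so the upper horizontal arrow of (\ref{eq:keycom-1}) carries this germ isomorphically onto that fibre; this is the step that imbeds the $F$-theory model into the versal family.

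Next I would pin down the singularity type and apply the equivariant resolution. For $(a_{2},a_{3},a_{4},a_{5})$ generic, as imposed after (\ref{eq:divcl-1}), and with the Higgs summand regular in the splitting $X=X_{gauge}^{sr}+X_{Higgs}^{r}$ fixed before (\ref{eq:goodincl}), the image of $p$ under the composite $B_3'\to\mathbb{C}^{4}\to\mathfrak{h}_{E_{8}}^{\mathbb{C}}/W(E_{8})$ of (\ref{eq:comdia}) lands, by (\ref{eq:gauge-Higgs}), in the stratum over which exactly the $A_{4}$ subsystem of $SU(5)_{gauge}$-roots degenerates. Hence $\Sigma_{p}$ carries an $A_{4}$ rational double point and nothing larger, and by the Brieskorn--Grothendieck description recalled in Section \ref{sec:Equivariant-crepant-resolution5} its crepant (here minimal) resolution has exceptional fibre the $A_{4}$ Dynkin curve, realized as the sub-chain of the $E_{8}$ Dynkin curve singled out by the inclusion (\ref{eq:goodincl}) of root systems. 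Pulling the simultaneous resolution of (\ref{eq:res}) back along (\ref{eq:keycom-1}) then yields a crepant resolution of $W_4'$ near $p$ whose exceptional curves are canonically labelled by the $SU(5)_{gauge}$ simple roots sitting inside the $E_{8}$ roots, which is precisely the tracking the theorem asserts.

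The hard part, and the reason the statement is phrased over a \emph{general} $p$, is a base-change issue: the simultaneous resolutions in (\ref{eq:res}) live over the Cartan $\mathfrak{h}_{SU(5)_{gauge}}^{\mathbb{C}}$, whereas $B_3'$ maps only to the quotient $\mathbb{C}^{4}=\mathfrak{h}_{SU(5)_{gauge}}^{\mathbb{C}}/W(SU(5))$ through the Casimirs $c_{j}$. To pull the resolution back I must produce a local lift $B_3'\to\mathfrak{h}_{SU(5)_{gauge}}^{\mathbb{C}}$ near $p$. This is possible precisely because, for general $p$, the value $(c_{2},c_{3},c_{4},c_{5})(p)$ avoids the discriminant over which the $A_{4}$ point degenerates, so the $W(SU(5))$-cover $\mathfrak{h}_{SU(5)_{gauge}}^{\mathbb{C}}\to\mathbb{C}^{4}$ is \'etale there and admits a local section, which I compose with $B_3'\to\mathbb{C}^{4}$ (an immersion at general $p$, $\psi_{3}$ being finite) to obtain the lift. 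Choosing such a section is the same datum as choosing a positive Weyl chamber, and by the discussion of Section \ref{sec:Equivariant-crepant-resolution5} together with Theorem \ref{thm:-symmetry-breaking-in-Heterotic} the two sections differ by $\iota=-I_{8}$, i.e.\ by the flop $(x,y)\mapsto(x,-y)$; I would record this chamber dependence explicitly, as it is exactly what later lets the $\mathbb{Z}_{2}$-action reverse the chamber without breaking $E_{8}$-symmetry. The one genuinely technical verification that remains is that the pulled-back resolution stays crepant over the fourfold $W_4'$, i.e.\ that fibering the weight-$30$ simultaneous resolution over the extra $(s_{1},s_{2})$-directions introduces no discrepancy; this follows from the crepancy of the Brieskorn--Grothendieck resolution together with the fact that $(s_{1},s_{2})$ enter (\ref{eq:link-1-1}) only through the $c_{j}$ and the flatness of $\pi$ in (\ref{eq:keycom-2}).
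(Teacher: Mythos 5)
Your route reaches the theorem's conclusion, but it is not the paper's route. The paper also slices $W_{4}'$ by a transverse disk $D_{p}$ and invokes the Kodaira $I_{5}$ fiber, but its actual tracking mechanism is Lie-theoretic: the normal vector $\nu_{p}$ to $S_{\mathrm{GUT}}$ at $p$ is lifted through the left and central columns of (\ref{eq:comdia}) to a nonzero subregular nilpotent element $X_{gauge}^{sr}=\tilde{\nu}_{p}$ in the nilpotent cone of $\mathfrak{g}_{SU(5)}^{\mathbb{C}}\subseteq\mathfrak{g}_{E_{8}}^{\mathbb{C}}$, which is completed to an $\mathfrak{sl}_{2}^{\mathbb{C}}$-triple by Jacobson--Morozov; the identification of the fiber of the crepant resolution of $W_{4}'$ over $p$ with the fiber of the equivariant resolution $\tilde{\mathcal{V}}_{E_{8}}/\mathfrak{h}_{E_{8}}^{\mathbb{C}}$ over a lift of the image of $p$ is then argued by matching the simple-summand decomposition of $\mathfrak{g}_{SU(5)_{gauge}}^{\mathbb{C}}$ as an $\mathfrak{sl}_{2}^{\mathbb{C}}$-module with the decomposition attached to the $I_{5}$ fiber. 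You replace all of this with a purely deformation-theoretic argument: singularity type exactly $A_{4}$ at general $p$, an \'etale local section of $\mathfrak{h}_{SU(5)_{gauge}}^{\mathbb{C}}\rightarrow\mathbb{C}^{4}$ to lift the classifying map, and pullback of the simultaneous resolution (\ref{eq:res}). That substitution is defensible at the paper's level of rigor, and it has the virtue of making explicit the Weyl-cover lifting issue that the paper's proof leaves implicit in the phrase ``over a lifting of the image of $p$''; what it loses is the direct contact with how the Brieskorn--Grothendieck resolution is actually built (Slodowy slices at subregular nilpotents inside $E_{8}^{\mathbb{C}}\times\{\text{Borels}\}$), which is precisely what the paper's $\mathfrak{sl}_{2}$-triple provides.

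Two concrete missteps in your version should be repaired. First, your opening identification is false as stated: the upper arrow of (\ref{eq:keycom-1}) does not carry $\Sigma_{p}=D_{p}\times_{B_{3}}W_{4}$ isomorphically onto the fiber of $\mathcal{V}_{Tate}\rightarrow\mathbb{C}^{4}$ over the point $\left(c_{2},c_{3},c_{4},c_{5}\right)(p)$, because the $c_{j}$ are not constant along $D_{p}$ (the map $\left(c_{2},c_{3},c_{4},c_{5}\right):B_{3}'\rightarrow\mathbb{C}^{4}$ has generically finite fibers, so no disk through a general $p$ maps to a point); the slice is spread over a curve in $\mathbb{C}^{4}$. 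Your pullback construction does not actually need that claim --- what it needs is that the classifying map $W_{4}'\rightarrow\mathcal{V}_{Tate}$, base-changed through your \'etale section, is transverse to (equisingular along) the $A_{4}$ locus, and that is where the crepancy verification you defer really lives --- so the step should be rephrased accordingly. Second, local sections of the \'etale $W\left(SU(5)\right)$-cover form a torsor under $W\left(SU(5)\right)$, and since $-I_{4}\notin W\left(SU(5)\right)$ (a point the paper itself stresses), two such sections never ``differ by $\iota=-I_{8}$.'' The chamber reversal $\iota$ relates the two resolutions $\dot{\mathcal{V}}$ and $\ddot{\mathcal{V}}$, not two sections of the cover; keeping these two choices separate is essential to the very distinction the paper is drawing for the $\mathbb{Z}_{2}$-action.
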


\begin{proof}
For general $p\in S_{\mathrm{GUT}}$, we define a holomorphic disk
$D_{p}\subseteq B_{3}$ meeting $S_{\mathrm{GUT}}$ transversely at
$p$ and form
\[
D_{p}\times_{B_{3}}W_{4},
\]
a smooth open elliptic surface whose resolution has an $I_{5}$-fiber
over $p$ by the Kodaira classification. Now $D_{p}$ determines a
normal vector $\nu_{p}$ to $S_{\mathrm{GUT}}\subseteq B_{3}$ at
$p$ that by the map from the left-hand column to the central column
of (\ref{eq:comdia}) lifts to a non-zero nilpotent subregular element
$X_{gauge}^{sr}=\tilde{\nu}_{p}$ in the nilpotent cone of the complex
Lie algebra $\mathfrak{g}_{SU\left(5\right)}^{\mathbb{C}}$ and so
into the nilpotent cone of the complex Lie algebra $\mathfrak{g}_{E_{8}}^{\mathbb{C}}$.
We complete $X_{gauge}^{sr}=\tilde{\nu}_{p}$ to an 
\[
\mathfrak{sl}_{2}^{\mathbb{C}}\,triple\subseteq\mathfrak{sl}_{5}^{\mathbb{C}}\subseteq\mathfrak{g}_{E_{8}}^{\mathbb{C}}
\]
via the Jacobson-Morosov theorem. Now the $I_{5}$-fiber over $p$
in the crepant resolution $\tilde{W}_{4}'/B_{3}'$ of $W_{4}'/B_{3}'$
implies that the decomposition of $\mathfrak{g}_{SU\left(5\right)_{gauge}}^{\mathbb{C}}$
as an $\mathfrak{sl}_{2}^{\mathbb{C}}$-module induced by the triple
must have a simple summand decomposition that coincides with a simple
decomposition associated with the $I_{5}$-fiber of $\mathcal{\dot{V}}_{Tate}/\mathfrak{h}_{SU\left(5\right)_{gauge}}^{\mathbb{C}}$,
respectively $\ddot{\mathcal{V}}_{Tate}/\mathfrak{h}_{SU\left(5\right)_{gauge}}^{\mathbb{C}}$,
and so of $\mathcal{\dot{V}}_{E_{8}}/\mathfrak{h}_{E_{8}}^{\mathbb{C}}$,
respectively $\ddot{\mathcal{V}}_{E_{8}}/\mathfrak{h}_{E_{8}}^{\mathbb{C}}$,
over a lifting of the image of $p$ in $\frac{\mathfrak{h}_{E_{8}}^{\mathbb{C}}}{W\left(E_{8}\right)}$.
Thus the fiber over $p$ of the crepant resolution $\tilde{W}_{4}'/B_{3}'$
of $W_{4}'/B_{3}'$ is induced by the fiber of the equivariant crepant
resolution $\mathcal{\tilde{V}}_{E_{8}}/\mathfrak{h}_{E_{8}}^{\mathbb{C}}$
over a lifting of the image of $p$ in $\frac{\mathfrak{h}_{E_{8}}^{\mathbb{C}}}{W\left(E_{8}\right)}$.
\end{proof}

\section{Crepant resolution conjecture\label{sec:Crepant-resolution-conjecture7}}

The diagram (\ref{eq:comdia}) and (\ref{eq:BG-1-1}) induce a commutative
diagram
\[
\begin{array}{ccccc}
W_{4}'\times_{B_{3}'}\mathfrak{h}_{SU\left(5\right)_{gauge}}^{\mathbb{C}} & \overset{\tilde{\vartheta}}{\dashrightarrow} & \tilde{\mathcal{V}}_{E_{8}} & \rightarrow & \mathcal{V}_{E_{8}}\\
\downarrow &  & \downarrow &  & \downarrow^{\pi}\\
B_{3}'\times_{\frac{\mathfrak{h}_{SU\left(5\right)_{gauge}}^{\mathbb{C}}}{W\left(SU\left(5\right)_{gauge}\right)}}\mathfrak{h}_{SU\left(5\right)_{gauge}}^{\mathbb{C}} & \rightarrow & \mathfrak{h}_{E_{8}}^{\mathbb{C}} & \rightarrow & \frac{\mathfrak{h}_{E_{8}}^{\mathbb{C}}}{W\left(E_{8}\right)}
\end{array}
\]
\begin{conjecture}
The closure of the graph of the rational map $\tilde{\vartheta}$
in the above diagram is a crepant resolution of $W_{4}'\times_{B_{3}'}\mathfrak{h}_{SU\left(5\right)_{gauge}}^{\mathbb{C}}$.
\end{conjecture}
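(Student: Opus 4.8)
The plan is to realize the closure of the graph of $\tilde{\vartheta}$ as (the normalization of) the fiber product of $W_4'\times_{B_3'}\mathfrak{h}_{SU(5)_{gauge}}^{\mathbb{C}}$ with the Brieskorn--Grothendieck simultaneous resolution $\tilde{\mathcal{V}}_{E_8}\to\mathfrak{h}_{E_8}^{\mathbb{C}}$ over $\mathfrak{h}_{E_8}^{\mathbb{C}}$, and then to inherit smoothness and crepancy from the properties of $\tilde{\mathcal{V}}_{E_8}$ established in \cite{Brieskorn,Slodowy}. The conceptual point is that Slodowy's resolution becomes an honest morphism, rather than a mere rational map, exactly after base change from $U_8=\mathfrak{h}_{E_8}^{\mathbb{C}}/W(E_8)$ to the Cartan $\mathfrak{h}_{E_8}^{\mathbb{C}}$; the base change defining $W_4'\times_{B_3'}\mathfrak{h}_{SU(5)_{gauge}}^{\mathbb{C}}$ is precisely the one lifting the Casimir classifying map to $\mathfrak{h}_{SU(5)_{gauge}}^{\mathbb{C}}$ and thence, via the Cartan inclusion (\ref{eq:incl-1}), to $\mathfrak{h}_{E_8}^{\mathbb{C}}$. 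Hence the indeterminacy of $\tilde{\vartheta}$ is confined to the locus where this lifted classifying map fails to meet the root-hyperplane arrangement transversely.

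First I would dispose of the generic locus. By the Theorem of Section~\ref{sec:Tracking-the-equivariant6}, over a general $p\in S_{\mathrm{GUT}}\cap B_3'$ the fiber of the resolution of $W_4'$ is the Dynkin curve of the $A_4$ (i.e. $I_5$) singularity and is induced by the fiber of $\tilde{\mathcal{V}}_{E_8}/\mathfrak{h}_{E_8}^{\mathbb{C}}$ over a lift of the image of $p$. Since the family (\ref{eq:link-1-1}) is weighted homogeneous of weight $30$, the Brieskorn--Grothendieck resolution $\tilde{\mathcal{V}}_{E_8}\to\mathcal{V}_{E_8}$ is crepant, and over the open set where the lifted classifying map is transverse to the root hyperplanes the map $\tilde{\vartheta}$ is already a morphism with smooth graph. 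There the verification reduces to the simultaneous resolution of a transverse one-parameter family of $A_4$ singularities, which is classical.

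The hard part will be the non-generic locus---the matter curves and Yukawa points of $S_{\mathrm{GUT}}$ along which the singular fibre enhances ($A_4$ degenerating to $D_5$, $E_6$, and so on). There one must check two separate things: that the lifted classifying map $B_3'\times_{\mathfrak{h}_{SU(5)_{gauge}}^{\mathbb{C}}/W(SU(5)_{gauge})}\mathfrak{h}_{SU(5)_{gauge}}^{\mathbb{C}}\to\mathfrak{h}_{E_8}^{\mathbb{C}}$ stays transverse enough to the hyperplane arrangement that the pullback of $\tilde{\mathcal{V}}_{E_8}$ remains smooth, and that $\tilde{\vartheta}$ extends across these loci without the graph closure acquiring spurious components. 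The genericity imposed on $a_2,a_3,a_4,a_5\in H^0(K_{B_3}^{-1})^{[-1]}$ should force $\psi_3$ to meet the discriminant in $U_8$ with the expected tangency, so that each enhancement is cut out by a transverse slice of a higher rational double point; but proving this in the requisite codimension is the \emph{genuine obstacle}, since transversality over the surface $S_{\mathrm{GUT}}$ need not survive the Cartan base change, and non-normal or non-flat behaviour of the graph closure over the isolated Yukawa points cannot be excluded a priori.

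Finally, granting smoothness, I would settle crepancy by comparing relative canonical divisors. The resolution $\tilde{\mathcal{V}}_{E_8}\to\mathcal{V}_{E_8}$ has zero discrepancy by \cite{Brieskorn,Slodowy}; crepancy is preserved under the base change $\mathfrak{h}_{SU(5)_{gauge}}^{\mathbb{C}}\to\mathfrak{h}_{E_8}^{\mathbb{C}}$ and under the further pullback along $W_4'\hookrightarrow\mathcal{V}_{Tate}$ because the weight-$30$ homogeneity matches the order of the nowhere-vanishing relative form $\frac{dx}{y}$ trivializing $K_{\tilde{W}_4'/B_3'}$. It would then follow that the canonical class of the graph closure is the pullback of that of $W_4'\times_{B_3'}\mathfrak{h}_{SU(5)_{gauge}}^{\mathbb{C}}$, which is exactly the assertion of the conjecture.
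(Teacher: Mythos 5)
The statement you are trying to prove is not proved in the paper at all: it is stated as a Conjecture in Section \ref{sec:Crepant-resolution-conjecture7} and left open, the paper remarking only that, \emph{if true}, it would extend the root-tracking theorem of Section \ref{sec:Tracking-the-equivariant6} from \emph{general} points of $S_{\mathrm{GUT}}\cap B_{3}'$ to \emph{every} point, in particular to the matter curves and Yukawa points. Your proposal adopts what is surely the intended architecture (base change to the Cartan, pull back the Brieskorn--Grothendieck simultaneous resolution, dispose of the generic locus via the theorem of Section \ref{sec:Tracking-the-equivariant6}, then compare canonical classes), but it is not a proof: by your own admission, the behavior of the graph closure over the enhancement loci --- exactly where the conjecture has content beyond the already-proved theorem --- is left unresolved. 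You do not rule out spurious components, non-normality, or failure of flatness of the graph closure over the matter curves and Yukawa points, nor do you establish the transversality of the lifted classifying map to the root-hyperplane arrangement that your fiber-product description of the graph closure requires. Since the generic locus is already covered by the paper's theorem, the part you concede as the ``genuine obstacle'' is the entire content of the conjecture.

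Two further technical points would need repair even as an outline. First, the Brieskorn--Grothendieck construction resolves the base-changed family $\mathcal{V}_{E_{8}}\times_{U_{8}}\mathfrak{h}_{E_{8}}^{\mathbb{C}}$, not $\mathcal{V}_{E_{8}}$ itself (over $U_{8}$ no simultaneous resolution exists --- this is the whole point of the Weyl-cover base change), so your appeal to ``zero discrepancy of $\tilde{\mathcal{V}}_{E_{8}}\to\mathcal{V}_{E_{8}}$'' must be restated; and smallness or crepancy of that resolution does not automatically survive a further base change followed by normalization, since fiber products with a resolution can acquire singularities and discrepancies precisely where the classifying map fails to be transverse to the discriminant. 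Second, your crepancy argument (``the weight-$30$ homogeneity matches the order of the relative form $dx/y$'') is an assertion, not a computation; discrepancies must be computed on the graph closure itself, whose structure over the non-generic locus is the very thing you have not controlled. In short, your proposal is a plausible strategy that reproduces what the paper already proves and reduces the conjecture to the same open problem the paper acknowledges; it does not close that problem.
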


If true, this conjecture would allow is to track a crepant resolution
of $W_{4}'/B_{3}'$ back to the crepant resolution of the $E_{8}$
rational double point singularity over \textit{every} $p\in S_{\mathrm{GUT}}\cap B_{3}'$,
for example over the points of the matter and Higgs curves.

\section{Conclusion}

In this paper we have confronted a problem proposed but not fully
resolved in \cite{Beasley}, \cite{Donagi-2} and \cite{Marsano}.
Our solution rests on the introduction of the complex conjugation
operator into the $\mathbb{Z}_{2}$-action on $W_{4}/B_{3}$ to produce
a Calabi-Yau quotient on which we still retain $SU\left(5\right)_{gauge}$-symmetry.
This last assertion is proved by tracing the exceptional fibers of
a crepant resolution $\tilde{W}_{4}/B_{3}$ of $W_{4}/B_{3}$ back
to the $E_{8}$-roots from which they evolved using the Brieskorn-Grothendieck
equivariant resolution of the semi-universal deformation of the $E_{8}$
rational double-point surface singularity. 

One is still left with the task of explicitly constructing the $B_{3}$,
the resolution $\tilde{W}_{4}/B_{3}$, and checking that the $\mathbb{Z}_{2}$-quotients
have the phenomenologically correct invariant. Our strategy will be
to first construct $B_{3}$ canonically from the geometry of $A_{4}$-roots
space in such a way that it is both symmetric with respect to the
action of complex conjugation and also has the desired numerical invariants.
That done, the construction of $\tilde{W}_{4}/B_{3}$ and verification
that it too has the correct numerical invariants will be relatively
straightforward. The authors carried this program out in two related
papers, ``F-theory over a Fano threefold built from $A_{4}$-roots" [arXiv: hep-th/1912.06902] 
and ``Heterotic-F-theory Duality with Wilson line Symmetry-breaking"  [arXiv: hep-th/1908.01913].

\section*{Acknowledgements}

The authors would like to thank Dave Morrison, Tony Pantev and Sakura
Schäfer-Nameki for their guidance and many helpful conversations over
several years. However the authors themselves take sole responsibility
for any errors or omissions in this paper. S.R. acknowledges partial
support from Department of Energy grant DE-SC0011726.

\address{Department of Mathematics, The Ohio State University\\
Columbus, OH 43210-1174, USA\\
\email{clemens.43@osu.edu}\\
\email{raby.1@osu.edu}}

\end{document}